\newtheorem{theorem}{Theorem}[section]
\newtheorem{lemma}[theorem]{Lemma}
\newtheorem{prop}[theorem]{Proposition}
\newtheorem{dfn}[theorem]{Definition}
\newtheorem{remark}[theorem]{Remark}
\newcommand*{\rom}[1]{\expandafter\@slowromancap\romannumeral #1@}
\def\a{\mathbf{a}}
\def\r{\mathbf{r}}
\def\b{\mathbf{b}}
\def\y{\mathbf{y}}
\def\A{\mathbf{A}}
\def\F{\mathbf{F}}
\def\G{\mathbf{G}}
\def\Q{\mathbf{Q}}
\def\g{\mathbf{g}}
\def\tg{\tilde{\mathbf{g}}}
\def\D{\mathbf{D}}
\def\one{\mathbf{1}}
\def\x{\mathbf{x}}
\def\z{\mathbf{z}}
\def\mset{\mathcal{A}}
\def\P{\mathbf{P}}
\def\Prob{\mathbb{P}}
\def\E{\mathbb{E}}
\def\Emtx{\mathbf{E}}
\def\C{\mathbb{C}}
\def\Cm{\mathbf{C}}
\def\W{\mathbf{W}}
\def\R{\mathbf{R}}
\def\u{\mathbf{u}}
\def\I{\mathbf{I}}
\def\bigo{\mathbf{O}}
\def\S{\mathbf{S}}
\def\U{\mathbf{U}}
\def\t{\mathbf{t}}
\def\V{\mathbf{V}}
\def\Vset{\mathcal{V}}
\def\v{\mathbf{v}}
\def\OOmega{\boldsymbol{\Omega}}
\def\SSigma{\boldsymbol{\Sigma}}
\def\ssigma{\boldsymbol{\sigma}}
\def\H{\mathbf{H}}
\def\eps{\boldsymbol{\epsilon}}
\def\bxi{\boldsymbol{\xi}}
\def\PPhi{\boldsymbol{\Phi}}
\def\PPsi{\boldsymbol{\Psi}}
\def\ex{\mathrm{e}}
\def\ric{\delta_s}
\def\sset{\mathcal{D}_{s,n}}
\def\B{\mathbf{B}}
\def\tB{\tilde{\mathbf{B}}}
\def\tn{\tilde{n}}
\def\hn{\hat{n}}
\def\zero{\mathbf{0}}
\def\xnorm{\|\x\|_2}
\def\ttwo{2\rightarrow2}
\def\diag{\text{diag}}
\def\aalpha{\boldsymbol{\alpha}}
\def\LLamda{\boldsymbol{\Lambda}}
\def\llamda{\boldsymbol{\lambda}}
\def\ggamma{\boldsymbol{\gamma}}
\def\Vx{\V_{\x}}
\def\Vy{\V_{\y}}
\def\tinf{\tilde{\infty}}
\def\bigo{\mathcal{O}}
\def\w{\mathbf{w}}
\begin{document}

\title{Modulated Unit-Norm Tight Frames for Compressed Sensing}

\author{Peng~Zhang, Lu~Gan, Sumei~Sun, and~Cong~Ling
\thanks{P. Zhang and C. Ling are with the Department
of Electrical and Electronic Engineering, Imperial College London, London,
SW7 2AZ, UK (e-mail: p.zhang12@imperial.ac.uk, cling@ieee.org) .}
\thanks{L. Gan is with the School of Engineering and Design, Brunel University, London, UB8 3PH, UK (e-mail: lu.gan@brunel.ac.uk).}
\thanks{S. Sun is with the Institute for Infocomm Research, A$^{*}$STAR, Singapore, 138632, Singapore (e-mail: sunsm@i2r.a-star.edu.sg).}}




\maketitle

\begin{abstract}
In this paper, we propose a compressed sensing (CS) framework that consists of three parts: a unit-norm tight frame (UTF), a random diagonal matrix and a column-wise orthonormal matrix. We prove that this structure satisfies the restricted isometry property (RIP) with high probability if the number of measurements $m=\bigo(s\log^2s\log^2n)$ for $s$-sparse signals of length $n$ and if the column-wise orthonormal matrix is bounded. Some existing structured sensing models can be studied under this framework, which then gives tighter bounds on the required number of measurements to satisfy the RIP. More importantly, we propose several structured sensing models by appealing to this unified framework, such as a general sensing model with arbitrary/determinisic subsamplers, a fast and efficient block compressed sensing scheme, and structured sensing matrices with deterministic phase modulations, all of which can lead to improvements on practical applications. In particular, one of the constructions is applied to simplify the transceiver design of CS-based channel estimation for orthogonal frequency division multiplexing (OFDM) systems.
\end{abstract}

\begin{IEEEkeywords}
Compressed sensing, structured sensing matrix, unit-norm tight frame, coherence analysis, arbitrary/deterministic subsampling, phase modulation, Golay sequence.
\end{IEEEkeywords}

\section{Introduction}
Compressed sensing (CS) as an emerging field has attracted vast consideration over recent years in the areas of applied mathematics, computer science, and electrical engineering \cite{EJC05robustuncertainty,EJC06nearoptimal,Donoho06compressedsensing,eldar2012compressed,foucart2013mathematical}. The theory provides an efficient way to solve an ill-conditioned linear inverse problem with the prior knowledge that the signal of interest is sparse or compressible. A length-$n$ signal $\aalpha$ is said to be $s$-sparse when it is in the form of an orthogonal signal representation. That is, $\aalpha$  can be decomposed as $\aalpha=\PPsi\x$, where the unitary matrix $\PPsi\in\C^{n\times n}$ is the sparsifying transform (or orthobasis), and $\x\in\C^n$ has $s$ non-zero entries, i.e. $\|\x\|_0:=|\{l:\x_l\neq 0\}|\leq s$. Similarly, a signal is $s$-compressible if its orthogonal representation $\x$ can be approximated by $s$ non-zero entries. The CS measurement model is expressed as
\begin{align}\label{eqn: cs measurement model}
    \y=\PPhi\aalpha+\w=\A\x+\w,
\end{align}
where $\PPhi\in\C^{m\times n}$, $m\leq n$, is referred to as the sensing matrix, $\y$ is the measurement vector, $\w$ is the noise vector and $\A$ is the product of the sensing matrix $\PPhi$ and the sparsifying transform $\PPsi$, $\A=\PPhi\PPsi$. The restricted isometry property (RIP) as a sufficient condition implies uniform and stable recovery of all $s$-sparse vectors via nonlinear optimization (e.g. $l_1$-minimization). Theoretically, there are two essential parameters measuring the performance of a CS setup: the probability of perfect (or nearly perfect) recovery of the unknown sparse vectors, and the corresponding requirement on the number of measurements. In this context, sensing matrices constructed from independent Gaussian/Bernoulli distributions are optimal in the sense that they can cope with any orthobasis such that the resultant matrix $\A$ satisfies the RIP with high probability if $m\geq\bigo(s\log n/s)$ \cite{Donoho06compressedsensing,EJC05robustuncertainty}.

Since Gaussian/Bernoulli random matrices incur large computation and storage costs in practical implementations, a wide variety of structured sensing matrices have been proposed in recent years. Despite considerable progress in the field, there are still some open questions in the aspect of theoretical analysis or practical implementations.
\begin{enumerate}
  \item In many numerical simulations, some structured sensing models, e.g. random demodulation \cite{tropp2010beyond}, exhibit comparable recovery performances to that of Gaussian/Bernoulli random matrices. However, there is still a large gap between their existing theoretical bounds on the number of measurements to satisfy the RIP and the optimal bound given by Gaussian/Bernoulli random matrices. Is it possible to reduce this gap through new mathematical tools?
  \item Many structured sensing models that involve a random subsampling operator have been proposed in literature, such as randomly subsampled orthogonal transforms \cite{EJC06nearoptimal,rudelson2008sparse}, random convolution \cite{romberg2009compressive,li2013convolutional}, random subsampling of bounded orthonormal system \cite{rauhut2010compressive} and etc. However, due to the constraints of practical implementations, measurement models that consist arbitrary/deterministic selection of each measurement vector are more preferable (e.g. radio interferometry and magnetic resonance imaging \cite{puy2012universal}). Can we design a general sampling scheme with arbitrary/deterministic subsampling operators?
  \item Although there exist some structured sensing models that have an arbitrary/deterministic subsampling operator, they perform well for sparse signals in very specific sparsifying bases, e.g. partial random circulant matrices \cite{krahmer2014suprema} only exhibit good recovery performance for sparse signals on the spatial domain. How to improve their compatibility with different sparsifying bases without introducing extra randomness?
\end{enumerate}

In this paper, we propose a unified framework for structured sensing models, thus obtaining positive answers to the questions above. Specifically, our framework consists of three parts: a unit-norm tight frame (UTF), a random diagonal matrix and a column-wise orthonormal matrix. The RIP analysis on the unified framework provides tighter bounds on the number of measurements to satisfy RIP for some structured sensing models by noticing that each of these constructions is a special case of the framework.

More importantly, we demonstrate that several new structured sensing models can be constructed and analyzed in the framework, including a general sensing model with arbitrary/determinisic subsamplers, random block diagonal matrices that support fast computation and efficient storage, and structured sensing matrices with deterministic phase modulations. Comparing with existing sensing models, these new designs lead to better implementation schemes in practical applications, such as imaging system and channel estimation of orthogonal frequency division multiplexing (OFDM) systems.

\subsection{Organization of the Paper}
The remainder of the paper is organized as follows. In Section \ref{sec: review of structured sensing}, we review structured sensing models commonly discussed in the CS literature and introduce the motivation of this paper. In Section \ref{sec: main results}, we present the main theorem for the RIP analysis on the proposed framework and provides tighter bounds on the number of measurements to satisfy the RIP for some existing structured sensing matrices. We apply the main theorem to the construction of several new sensing models in Section \ref{sec: new applications}, including a new channel estimation scheme for OFDM systems by employing the idea of deterministic phase modulation. Simulation results are given in Section \ref{sec: simulations}, followed by conclusions in Section \ref{sec: conclusion}. We defer all proofs to the Appendices.

\subsection{Notations and Preliminaries}\label{sec: notations and preliminaries}
We give the notations and review some important notions in compressed sensing. For a vector $\a$, we denote by $a_i$, ($i\in[n]=\{0,...,n-1\}$), the $i$-th element of this vector. We represent a sequence of vectors by $\a_0,...,\a_{n-1}$ and a column vector with $q$ ones by $\one_q$. For a matrix $\A$, $\A_{jk}$ denotes the element on its $j$-th row and $k$-th column. The vector obtained by taking the $j$-th row ($k$-th column) of $\A$ is represented by $\A_{(j,:)}$ ($\A_{(:,k)}$). $\A_{(1:q)}$ denotes the submatrix consisting of the first $q$ columns of $\A$. We also denote by $\A_0,...,\A_{n-1}$ a sequence of matrices. $\A^{-1}$ and $\A^*$ represent the inverse and the conjugate transpose of $\A$. $\A\otimes \B$ denotes the Kronecker product of $\A$ and $\B$. The Frobenius norm and the operator norm of matrix $\A$ are denoted by $\|\A\|_F=\sqrt{\text{tr}(A^*A)}$ and $\|\A\|_{2\rightarrow2}=\sup_{\xnorm=1}\|\A\x\|_2$ respectively.

We define $\F$ ($\F^*$) as the normalized (inverse) discrete Fourier transform (DFT) matrix the dimension of which will be clear from the context. For the identity matrix, we use the subscript to denote the dimension, i.e. $\I_l$ denotes the $l\times l$ identity matrix. We omit the subscript whenever the dimension is clear to see from the context.

We represent the block diagonal matrix generated from a set of $L$ matrices $\{\A_0, \A_1, \cdots, \A_{L-1}\}$, $\A_i\in\C^{p\times q}$ for $i\in[L]$, by
\begin{align*}
\diag([\A_i]_{i=0}^{L-1})=\begin{bmatrix}
                            \A_0 &  &  &  \\
                             & \A_1 &  &  \\
                             &  & \ddots &  \\
                             &  &  & \A_{L-1}
                          \end{bmatrix}.
\end{align*}

Let $\OOmega\subset\{0,\cdots,n-1\}$ be an arbitrary/deterministic set of cardinality $m$, and denote by $\R_{\OOmega}: \C^{n}\rightarrow\C^{m}$ the subsampling operator that restricts a vector $\x\in\C^n$ to its entries in $\OOmega$. Similarly, $\R_{\OOmega'}: \C^{n}\rightarrow\C^{m}$ represents a random subsampling operator where each elements in $\OOmega'$ is selected independently and uniformly from $\{0,\cdots,n-1\}$. We write $A\lesssim B$ if there is an absolute constant $c$ such that $A\leq cB$.
\subsubsection{Coherence parameter}
The coherence parameter $\mu(\A)$ of an $\tn\times n$ matrix $\A$ describes the maximum magnitude of the elements of $\A$ \cite{candes2011probabilistic}
\begin{align*}
\mu(\A)=\max_{\substack{0\leq j<\tn \\ 0\leq k<n}} |\A_{jk}|.
\end{align*}
For a unitary matrix $\PPsi\in\C^{n\times n}$, we have $\frac{1}{\sqrt{n}}\leq \mu(\PPsi)\leq 1$.
\subsubsection{Restricted Isometry Property}
One important notion that has been successfully used to establish \emph{uniform recovery guarantees} is the restricted isometry property (RIP). For a CS measurement model \eqref{eqn: cs measurement model}, uniform and stable recovery of all $s$-sparse signals in the sparsifying basis $\PPsi$ via nonlinear optimization (e.g. $l_1$-minimization) is ensured provided that the matrix $\A=\PPhi\PPsi$ satisfies the RIP. Therefore, in this paper, the product of the sensing matrix $\PPhi$ and the sparsifying transform $\PPsi$, e.g. $\A=\PPhi\PPsi$, is referred to as the sensing model.
\begin{dfn}[\cite{candes2005decoding}]
A matrix $\A\in\C^{m\times n}$ is said to satisfy the RIP of order $s$ and level $\delta$ if
\begin{align}\label{eqn: rip}
(1-\delta)\|\x\|_2^2\leq\|\A\x\|_2^2\leq(1+\delta)\|\x\|_2^2,
\end{align}
holds for all $s$-sparse vectors $\x\in\C^{n}$ with respect to the identity basis $\I$. The smallest $\delta$ that satisfies \eqref{eqn: rip} is called the restricted isometry constant of order $s$ and denoted by $\ric$.
\end{dfn}

\section{Review of structured sensing models}\label{sec: review of structured sensing}
Although Gaussian and Bernoulli random matrices have been shown to satisfy the RIP with optimal bound on the number of measurements, they have limitations in practice for several reasons: the design of the measurement matrix is usually subject to constraints of the application; the large computation and storage cost by using Gaussian and Bernoulli random matrices impedes their application in large scale problems. This leads to the study of structured sensing models. In this section, we briefly review standard structured sensing models in the CS literature.

\subsection{Randomly subsampled orthogonal system}
The first structured sensing model proposed in the literature consists of randomly chosen rows of the discrete Fourier matrix, e.g. $\A=\sqrt{\frac{n}{m}}\R_{\OOmega'}\F$ ($\A=\PPhi$, $\PPsi=\I$) and $\F$ here represents an $n\times n$ normalized DFT matrix \cite{EJC06nearoptimal}. This model known as \emph{random partial Fourier} can provide fast matrix multiplication by using fast Fourier transform (FFT) algorithm. However, it performs poorly when dealing with signals in other sparsifying basis, e.g. wavelet basis. To tackle this problem, a new group of sensing models have been proposed by adding a diagonal matrix to the random partial Fourier model. Their measurement matrices can be written as $\PPhi=\sqrt{\frac{n}{m}}\R_{\OOmega'}\F\diag(\t)$. \cite{gan2012golay,gan2013golay} show that the measurement matrices can efficiently sample a sparse signal in the identity or Fourier or wavelet basis when $\t$ is a Golay sequence. Whereas, \cite{do2012fast} demonstrates that the measurement matrices can guarantee faithful recovery for a sparse signal in any basis provided that $\t$ is a random Bernoulli vector.

Another type of structured sensing models arises in applications where convolutions are involved. The sensing model, known as \emph{random convolution} based CS, is formed by randomly selecting rows from a circulant matrix, e.g. $\A=\PPhi\PPsi=\frac{1}{\sqrt{m}}\R_{\OOmega'}\H_{\r}\PPsi$ and $\H_{\r}$ is a circulant matrix formed by a vector $\r$, i.e.
\begin{align*}
    \H_{\r}=\begin{bmatrix} \r_0 & \r_{n-1} & \cdots & \r_1\\
                            \r_1 & \r_0 & \cdots & \r_2 \\
                            \vdots & \vdots & \ddots & \vdots\\
                            \r_{n-1} & \r_{n-2} & \cdots & \r_0
                            \end{bmatrix}.
\end{align*}
For this setup, the vector $\r$ can be either random or deterministic: when each element of $\t$ is drawn from i.i.d. Gaussian/Bernoulli random variables, the model can ensure good recovery performance for sparse signals in any basis \cite{romberg2009compressive}; when $\t$ forms a nearly perfect sequence (e.g. Golay sequence), good recovery guarantee can be proved for signals that are sparse in identity or Fourier or DCT basis \cite{li2013convolutional}.

Actually, all of the above structured sensing models can be analyzed in a general framework that consists of randomly subsampling orthogonal system \cite{rauhut2010compressive}. Suppose $\B$ is an arbitrary unitary matrix, it has been proved that
\begin{align*}
\A=\sqrt{\frac{n}{m}}\R_{\OOmega'}\B
\end{align*}
satisfies the RIP with high probability if $m\geq c\delta^{-2}sn^2\mu^2(\B)\log^4n$. Besides those have been reviewed so far, this framework encompasses many other structured sensing models that consist of randomly subsampling operators including \cite{rudelson2008sparse,baraniuk2008single,ma2009single,wiaux2010spread,puy2012universal}. In \cite{candes2011probabilistic}, a more general structure was proposed and analyzed based on \emph{nonuniform recovery guarantees}, e.g. no RIP is shown. We note that structures considered in both \cite{rauhut2010compressive} and \cite{candes2011probabilistic} consist in selecting each row vector independently from the others. However, as will be introduced in the following subsections, there exist other structured sensing models that can not be grouped into this category.

\subsection{System with fixed sampling locations}\label{sec: sys with fixed sampling}
The \emph{partial random circulant} sensing model can be expressed as $\A=\PPhi\PPsi=\frac{1}{\sqrt{m}}\R_{\OOmega}\H_{\r}$, where $\PPsi=\I$ and $\H_{\r}$ represents a circulant matrix formed by a random vector $\r$ \cite{krahmer2014suprema}. It is different from the random convolution sensing model for two reasons: first, it consists of an arbitrary/deterministic subsampling operator instead of a random one; second, it only copes with sparse signals in the identity basis ($\PPsi=\I$).

The second structured sensing model, named as \emph{random demodulation}, is motivated by analog to digital conversion. Let $\one_q$ represent a column vector with $q$ ones, the model can be represented as \cite{tropp2010beyond}
\begin{align}\label{eqn: random demodulation}
\A=\P_1\SSigma\tilde{\F}
\end{align}
where
\begin{align*}
\P_1&=\I_m\otimes\one_q^T\\
    &=\begin{bmatrix}\text{$1$ $1$ $\cdots$ $1$} & & & \\
                                           & \text{$1$ $1$ $\cdots$ $1$} & & \\
                                           & & \cdots & \\
                                           & & & \text{$1$ $1$ $\cdots$ $1$}\end{bmatrix},
\end{align*}
and $\SSigma=\diag(\ssigma)$ with $\ssigma$ being a length-$n$ Rademacher vector ($n=mq$). $\tilde{\F}\in\C^{n\times n}$ denotes a permuted DFT matrix, i.e.,
\begin{align*}
\tilde{\F}_{jk}=\frac{1}{\sqrt{n}}\ex^{-2\pi i jk/n},
\end{align*}
where $j=0,\cdots,n-1$ and $k=0,\pm1,\cdots,\pm(\frac{n}{2}-1),\frac{n}{2}$. In random demodulation, the matrix $\P_1$ is known as the integrator. Here, $\PPhi=\P_1\SSigma$ and $\PPsi=\tilde{\F}$.

\subsection{Multiple channel systems}\label{sec: mulitple channel sys}
This type of sensing models is constructed by concatenating structured matrices. There are mainly two structured sensing models belonging to this type: \emph{random probing} and \emph{compressive multiplexing}. The random probing model was proposed to estimate the channel response between multiple source-receiver pairs, which can be applied in seismic exploration, channel estimation of MIMO systems and coded aperture imaging \cite{romberg2010sparse}. Let $\G_i=\diag(\g_i)$ with $\g_i\in\C^{m}$ being the random probe signals. The random probing model can be represented as below.
\begin{align}\label{eqn: random probing}
\A=\F^*\begin{bmatrix}\G_0\F_{(1:q)} & \G_1\F_{(1:q)} & \cdots & \G_{L-1}\F_{(1:q)}\end{bmatrix}
\end{align}
where $\F$ represents an $m\times m$ normalized DFT matrix. In this model, $\A=\PPhi$ and $\PPsi=\I$. It is noted that each block $\F^*\G_i\F_{(1:q)}$ is a submatrix obtained by selecting the first $q$ columns of an circulant matrix. In \cite{slavinsky2011compressive}, the compressive multiplexing sensing model was proposed and applied in recovering of signals that are jointly sparse over the combined bandwidth of a number of spectrum channels. Mathematically, it can be represented as
\begin{align}\label{eqn: compressive multiplexing}
\A=\begin{bmatrix}\SSigma_0 & \SSigma_1 & \cdots & \SSigma_{L-1}\end{bmatrix}\F,
\end{align}
where $\F$ is an $m\times m$ normalized DFT matrix, $\SSigma_i=\diag(\ssigma_i)$ and $\{\ssigma_i\}$ are independent length-$m$ Rademacher vectors. Here, $\PPhi=\begin{bmatrix}\SSigma_0 & \SSigma_1 & \cdots & \SSigma_{L-1}\end{bmatrix}$ and $\PPsi=\F$.

It is noted that the sensing models in this category can not be analyzed by the existing framework proposed in \cite{rauhut2010compressive,candes2011probabilistic} since none of these models consists in selecting each row vector independently from the others. Is it possible to find a new framework that encompasses these sensing models? At the first glance, the answer may be pessimistic since these four sensing models seem isolated to each other.

However, we will develop a unified framework and demonstrate that it includes all of the sensing models in Section \ref{sec: sys with fixed sampling} and \ref{sec: mulitple channel sys}. Generally, our framework and the one proposed in \cite{rauhut2010compressive,candes2011probabilistic} complement each other; many of the structured sensing models commonly discussed in CS can now be classified and analyzed in one of both frameworks. The contributions of our proposed framework are twofold. Firstly, it proves tighter RIP bounds on the required number of measurements for some existing structured sensing models (see Section \ref{sec: main results}). Secondly, our newly designed sensing models can bring various improvements in practical applications (see Section \ref{sec: new applications}).

\section{Main results}\label{sec: main results}
In this section, we present our main theoretical results on the recovery of sparse (or compressible) signals from structured measurements and demonstrate how to obtain tighter RIP bounds for some of the existing structured sensing models by using the proposed framework.

Before continuing, we pause to review the definition and useful properties of unit-norm tight frames (UTF) that are essential for our theorem. For more details, see \cite{christensen2002introduction}, for example.
\subsection{Unit-norm Tight Frames}
A set of vectors $\Vset=\{\v_i\}_{i\in[n]}$ in a complex Hilbert Space $\C^m$ is called a finite frame if
\begin{align*}
\alpha\xnorm^2\leq\sum_{i\in[n]}|\langle\v_i,\x\rangle|^2\leq\beta\xnorm^2,
\end{align*}
for all $\x\in\C^m$. If $\alpha=\beta$, then the frame is tight. When the frame vectors all have unit norm, i.e. $\|\v_i\|_2=1$, it is called a unit-norm frame. A unit-norm tight frame (UTF) has
\begin{align}\label{eqn: UTF frame bound}
\alpha=\frac{n}{m}.
\end{align}
We form an associated $m\times n$ matrix with the frame vectors as its columns
\begin{align*}
\V=\begin{bmatrix}
       \v_0 & \v_1 & \v_2 & \cdots & \v_{n-1}
     \end{bmatrix},
\end{align*}
then the following proposition can be adapted from \cite{tropp2005designing}.
\begin{prop}[Proposition 1 \cite{tropp2005designing}]\label{prop: UTF}
An $m\times n$ normalized matrix $\V$ is a UTF if and only if it satisfies one (hence both) of the following conditions.
\begin{itemize}
  \item The $m$ nonzero singular values of $\V$ equal $\sqrt{\frac{n}{m}}$.
  \item The rows of $\sqrt{\frac{m}{n}}\V$ form an orthonormal family.
\end{itemize}
\end{prop}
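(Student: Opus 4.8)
\medskip
\noindent\emph{Proof strategy.}~The plan is to restate both bulleted conditions as identities for the frame operator $\S:=\V\V^{*}=\sum_{i\in[n]}\v_i\v_i^{*}\in\C^{m\times m}$ and then read everything off from elementary linear algebra. First I would note that for every $\x\in\C^{m}$,
\begin{align*}
\sum_{i\in[n]}|\langle\v_i,\x\rangle|^{2}=\sum_{i\in[n]}\x^{*}\v_i\v_i^{*}\x=\x^{*}\S\x,
\end{align*}
so that $\V$ being a tight frame with bound $\alpha$ amounts to $\x^{*}\S\x=\alpha\xnorm^{2}$ for all $\x$; since a Hermitian quadratic form on $\C^{m}$ determines its matrix, this is equivalent to $\S=\alpha\I_{m}$. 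Taking traces and using that $\V$ is normalized (unit-norm columns), $\text{tr}(\S)=\text{tr}(\V^{*}\V)=\sum_{i\in[n]}\|\v_i\|_{2}^{2}=n$, whereas $\text{tr}(\alpha\I_{m})=\alpha m$; hence $\alpha=n/m$, which establishes \eqref{eqn: UTF frame bound}, and ``$\V$ is a UTF'' becomes exactly $\S=\tfrac{n}{m}\I_{m}$.

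For the first bullet I would use that the nonzero singular values of $\V$ are the positive square roots of the nonzero eigenvalues of the $m\times m$ Hermitian matrix $\S=\V\V^{*}$, and that their number equals $\text{rank}\,\V=\text{rank}\,\S$. If $\V$ is a UTF then $\S=\tfrac{n}{m}\I_{m}$ has all $m$ eigenvalues equal to $n/m$, so $\V$ has full row rank $m$ (consistent with $m\le n$) and its $m$ nonzero singular values all equal $\sqrt{n/m}$. Conversely, if the normalized matrix $\V$ has exactly $m$ nonzero singular values, all equal to $\sqrt{n/m}$, then the Hermitian matrix $\S$ has every eigenvalue equal to $n/m$, hence $\S=\tfrac{n}{m}\I_{m}$, and by the previous paragraph $\V$ is a UTF.

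For the second bullet, set $\tilde{\V}:=\sqrt{m/n}\,\V$. The rows of $\tilde{\V}$ form an orthonormal family iff $\tilde{\V}\tilde{\V}^{*}=\I_{m}$, i.e.\ $\tfrac{m}{n}\V\V^{*}=\I_{m}$, i.e.\ $\S=\tfrac{n}{m}\I_{m}$ --- again the UTF condition. Hence each bullet is equivalent to $\V$ being a UTF, and therefore to one another, as claimed.

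The only delicate point --- and the closest thing to an obstacle --- is the bookkeeping around the hypothesis ``normalized'': the two bulleted conditions constrain only $\S=\V\V^{*}$ and do \emph{not} by themselves force the columns of $\V$ to be unit vectors (e.g.\ the single row $[\sqrt{n/m},0,\dots,0]$ meets the singular-value condition when $m=1$), so the equivalences must be read with the standing assumption that $\V$ is a unit-norm matrix, which is precisely what pins the frame bound to $n/m$ via the trace computation. Apart from that, the argument rests only on three standard facts: a Hermitian form on $\C^{m}$ determines its matrix; the singular values of $\V$ are the square roots of the eigenvalues of $\V\V^{*}$; and a rectangular matrix has orthonormal rows iff it times its conjugate transpose equals the identity.
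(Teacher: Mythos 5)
Your proof is correct. The paper itself gives no proof of this proposition --- it is imported from \cite{tropp2005designing} --- so there is nothing internal to compare against; your argument via the frame operator $\S=\V\V^{*}$ (tightness $\Leftrightarrow$ $\S=\alpha\I_m$ by polarization, the trace computation pinning $\alpha=n/m$ under the unit-norm hypothesis, and the identification of the nonzero singular values of $\V$ with the square roots of the nonzero eigenvalues of $\S$) is the standard route and is essentially how the cited reference establishes the result. Your closing observation that the two bulleted conditions by themselves only constrain $\V\V^{*}$ and do not force unit-norm columns, so that ``normalized'' must be read as a standing hypothesis rather than a consequence, is a correct and worthwhile clarification of how the equivalence is to be parsed.
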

By Proposition \ref{prop: UTF}, it is easy to verify that the following matrices are UTFs.
\begin{align}
\P_1&=\I_m\otimes\one_q^T\nonumber\\
    &=\begin{bmatrix}\text{$1$ $1$ $\cdots$ $1$} & & & \\
                                           & \text{$1$ $1$ $\cdots$ $1$} & & \\
                                           & & \cdots & \\
                                           & & & \text{$1$ $1$ $\cdots$ $1$}\end{bmatrix},\label{eqn: p1}\\
\P_2&=\one_L^T\otimes\F^*=\begin{bmatrix}\F^* & \F^* & \cdots & \F^* \end{bmatrix},\label{eqn: p2}\\
\P_3&=\one_L^T\otimes\I_m=\begin{bmatrix}\I & \I & \cdots & \I \end{bmatrix}\label{eqn: p3}.
\end{align}
In general, a UTF can be obtained from Harmonic frames or Gabor frames \cite{casazza2003equal}.

\subsection{Main theorem}
We are now ready to present the main theorem of this paper.
\begin{theorem}\label{thm: the main theorem}
Consider a framework that consists of three parts $\A=\U\D\tB$, where $\U\in\C^{m\times \tn}$ is a UTF, $\D=\diag(\bxi)$ is a diagonal matrix with $\bxi$ being a length-$\tn$ random vector with independent, zero-mean, unit-variance, and $r$-subgaussian entries, and $\tB\in\C^{\tn\times n}$ represents a column-wise orthonormal matrix, i.e. $\tB^*\tB=\I$. If, for $\delta\in (0,1)$,
\begin{align*}
m\geq c_1\delta^{-2}s\tn\mu^2(\tB)(\log^2s\log^2\hn),
\end{align*}
where $\hn :=\max\{\tn,n\}$ and $c_1>0$ is a constant, then with probability at least $1-\hn^{-(\log \hn)(\log s)^2}$, the restricted isometry constant of $\A=\U\D\tB$ satisfies $\ric\leq\delta$.
\end{theorem}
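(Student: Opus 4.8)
The plan is to reduce the RIP statement for $\A=\U\D\tB$ to a chaos-process supremum bound, following the modern approach to RIP via Dudley-type entropy integrals (as in the analysis of partial random circulant matrices and bounded orthonormal systems). The restricted isometry constant $\ric$ is, by definition, $\sup_{\x\in D_{s,n}}\bigl|\,\|\A\x\|_2^2-\|\x\|_2^2\,\bigr|$ where $D_{s,n}=\{\x:\|\x\|_2=1,\ \|\x\|_0\le s\}$. Writing $\D=\diag(\bxi)$ and noting that $\tB$ is column-wise orthonormal so that $\|\tB\x\|_2=\|\x\|_2$, I would fix $\x\in D_{s,n}$, set $\y=\tB\x$, and observe that $\|\A\x\|_2^2=\|\U\D\y\|_2^2=\bxi^*\,\diag(\overline{\y})\,\U^*\U\,\diag(\y)\,\bxi$. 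Since $\U$ is a UTF, Proposition~\ref{prop: UTF} gives $\U^*\U$ eigenvalues lying in $\{0,\tn/m\}$; more importantly $\tfrac{m}{\tn}\U^*\U$ is an orthogonal projection, and $\E\|\A\x\|_2^2=\|\y\|_2^2\cdot\tfrac{\operatorname{tr}(\text{projection-ish quantity})}{\cdots}$ — the UTF frame bound \eqref{eqn: UTF frame bound} is precisely what makes this expectation equal $\|\x\|_2^2$. Thus $\|\A\x\|_2^2-\|\x\|_2^2$ is a centered quadratic form (chaos of order $2$) in the subgaussian vector $\bxi$, of the form $\bxi^*\M_\x\bxi-\E[\bxi^*\M_\x\bxi]$ with $\M_\x=\diag(\overline{\y})\,\U^*\U\,\diag(\y)$.

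Next I would invoke the Hanson–Wright / Krahmer–Mendelson–Rauhut "suprema of chaos processes" machinery: for a set of matrices $\mathcal{M}=\{\M_\x:\x\in D_{s,n}\}$, the quantity $\sup_{\x}|\bxi^*\M_\x\bxi-\E|$ concentrates around a multiple of $d_F(\mathcal{M})\,\gamma_2(\mathcal{M},\opmetric)+\gamma_2(\mathcal{M},\opmetric)^2+d_F(\mathcal{M})d_{\ttwo}(\mathcal{M})$, where $\gamma_2$ is Talagrand's functional with respect to the operator-norm metric, and $d_F,d_{\ttwo}$ are the radii in Frobenius and operator norm. The core work is therefore to bound these three geometric quantities for the specific set $\mathcal{M}$. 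For the radii: $\|\M_\x\|_{\ttwo}\le\|\U^*\U\|_{\ttwo}\|\diag(\y)\|_{\ttwo}^2=\tfrac{\tn}{m}\,\mu(\tB)^2\|\x\|_1^2/\cdots$ — more carefully, $\|\diag(\y)\|_{\ttwo}=\max_i|y_i|\le\mu(\tB)\|\x\|_1\le\mu(\tB)\sqrt{s}$ by sparsity, giving $d_{\ttwo}(\mathcal{M})\lesssim \tfrac{\tn}{m}s\,\mu(\tB)^2$, and similarly $d_F(\mathcal{M})\lesssim \sqrt{\tfrac{\tn}{m}}$-type bounds via $\|\M_\x\|_F\le\|\diag(\overline\y)\|_{\ttwo}\|\U^*\U\diag(\y)\|_F$. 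For $\gamma_2$, I would bound the entropy numbers of $\mathcal{M}$ using the standard chaining split between small and large scales: at large scales use a volumetric/Maurey-type argument exploiting that $\x$ ranges over a union of $\binom{n}{s}$ unit spheres of dimension $s$, contributing the $\log n$ and $\log s$ factors; at small scales control covering numbers directly by the operator-norm Lipschitz dependence of $\M_\x$ on $\x$. Combining via Dudley's integral $\gamma_2(\mathcal{M},\opmetric)\lesssim\int_0^\infty\sqrt{\log N(\mathcal{M},\opmetric,u)}\,du$ will produce a bound of the shape $\sqrt{\tfrac{\tn}{m}\,s\,\mu(\tB)^2}\cdot\log s\,\log\hn$, whence requiring this to be $\lesssim\delta$ yields exactly $m\gtrsim\delta^{-2}s\tn\mu(\tB)^2\log^2 s\log^2\hn$.

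Finally I would collect the tail: the chaos supremum inequality gives, for $t>0$, a bound of the form $\Prob\bigl(\ric>c(E+t)\bigr)\le 2\exp(-c'\min\{t^2/V^2,\,t/U\})$ where $E$ is the expectation bound just estimated, $V\asymp d_F\cdot d_{\ttwo}$ and $U\asymp d_{\ttwo}^2$; choosing $t\asymp\delta$ and substituting the radii bounds, the condition on $m$ forces $E\le\delta/2$ and makes both exponents at least $(\log\hn)(\log s)^2\log\hn$, delivering the stated failure probability $\hn^{-(\log\hn)(\log s)^2}$. The main obstacle I anticipate is the $\gamma_2$ estimate: one must handle the interaction between the UTF structure of $\U$ (which controls the "inner" operator $\U^*\U$ uniformly but is otherwise arbitrary among tight frames, so no Fourier-type decay can be assumed) and the combinatorial sparsity of $\x$, and in particular get the $\tn$ dependence (rather than $\tn^2$, as in the cruder coherence bound for subsampled orthonormal systems recalled in Section~\ref{sec: review of structured sensing}) correct — this is precisely where the tightness of the frame, via $\|\U^*\U\|_{\ttwo}=\tn/m$ rather than a naive $\sqrt{\tn\cdot\tn}$ Frobenius estimate, must be exploited, and where care is needed so the entropy integral converges without an extra $\sqrt{\log\hn}$ factor.
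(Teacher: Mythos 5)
Your proposal follows essentially the same route as the paper's proof: rewrite $\U\D\tB\x$ as $\U\,\diag(\tB\x)\,\bxi$ so that $\ric$ becomes a supremum of subgaussian chaos, apply the Krahmer--Mendelson--Rauhut concentration bound, and control $d_F$, $d_{2\rightarrow 2}$ and $\gamma_2$ via $\|\tB\x\|_\infty\le\sqrt{s}\,\mu(\tB)$ together with Dudley's integral and the covering-number estimates for $s$-sparse unit vectors under the seminorm $\|\tB\cdot\|_\infty$. One bookkeeping point: the chaos theorem must be applied to the set of \emph{square-root} matrices $\V_{\x}=\U\,\diag(\tB\x)$, for which $d_F=1$ exactly (from the unit-norm columns of $\U$ and $\tB^*\tB=\I$, not the tight-frame bound) and $d_{2\rightarrow 2}=\sqrt{s\tn/m}\,\mu(\tB)$, rather than to the quadratic forms $\M_{\x}=\V_{\x}^*\V_{\x}$ whose radii you list --- those are the squares of the correct ones and would not reproduce the claimed scaling if inserted into the concentration inequality, although your $\gamma_2$ estimate and final conclusion already correspond to the correct set.
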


\begin{proof}Details of the proof are given in Appendix \ref{appndix: proof of main}.\end{proof}

In this paper, we coin the combination $\U\D$ a randomly modulated UTF since the diagonal of $\D$ is a random sequence. Clearly, the theorem still holds if $\tB$ is a unitary matrix. When $\tB$ is a bounded column-wise orthonormal matrix, i.e. $\mu(\tB)=\bigo(1/\sqrt{\tn})$, and $n=c_2\tn$ for a constant $c_2>0$, the bound on the number of measurements can be reduced to
\begin{align}
m\geq c_3\delta^{-2}s(\log^2s\log^2n), \quad c_3>0, \label{eqn: main theorem eqn 2}
\end{align}
which indicates that the number of measurements is linear in the sparsity level $s$ and (poly-)logarithmic in the signal dimension $n$. We term this construction a UDB (UTF, Diagonal and Bounded) framework.

The construction of sensing matrices by the use of UTF has been considered recently in literature \cite{bandeira2013road,chen2013projection,Tsiligianni2014construction}. However, none of their recovery performances is based on the RIP analysis. We refer the readers to \cite{vershynin2010introduction} for the background on $r$-subgaussian random variables/vectors. A simple example is a Rademacher or Steinhaus vector.

Our framework is both simple and general: first, it characterizes a variety of existing structured sensing models; second, many new structured sensing models can be constructed and analyzed within this framework (see Section \ref{sec: new applications}).

In the following subsection, we demonstrate how to obtain tighter RIP bounds for some structured sensing models by using the proposed framework. (See Table \ref{table rip bounds} for a summary of the comparison results.)
\begin{table*}
  \centering
  \caption{The RIP bounds of some sensing models obtained by the UDB framework}\label{table rip bounds}
  \begin{minipage}{0.75\textwidth}
  \centering
  \begin{tabular}{|c|c|c|c|}
    \hline
    Sensing models & Our bound & Previous bound & Ref \\
    \hline
    \hline
    Random demodulation & $m\geq \bigo(\delta^{-2}s\log^2s\log^2n)$ & $m\geq \bigo(\delta^{-2}s\log^6n)$ & Theorem $16$ \cite{tropp2010beyond} \\
    \hline
    Random probing & $m\geq \bigo(\delta^{-2}s\log^2s\log^2n)$ & $m\geq \bigo(\delta^{-2}s\log^6n)$ & Theorem $3.3$ \cite{romberg2010sparse} \\
    \hline
    Compressive multiplexing & $m\geq \bigo(\delta^{-2}s\log^2s\log^2n)$ & $m\geq \bigo(\delta^{-2}s\log^4n)$  \footnote{an expectation bound} & Theorem 3.1 \cite{romberg2009multiple} \\
    \hline
\end{tabular}
\end{minipage}
\end{table*}

\subsection{Tighter RIP Bounds}
With the help of our framework, the RIP analysis for some structured sensing models can be easily accomplished by noticing that the sensing model can be decomposed into three parts, all of which match exactly with those specified in Theorem \ref{thm: the main theorem}.

Firstly, it can be seen that the random demodulation sensing model \eqref{eqn: random demodulation} consists of three matrices, each of which matches with the three parts specified in our framework: $\P_1=\I\otimes\one_q^T$ is a UTF \eqref{eqn: p1}, $\tilde{\F}$ is a column-wise orthonormal matrix. Since $\mu(\tilde{\F})=1/\sqrt{n}$, the required number of measurements for this model to satisfy the RIP is given by \eqref{eqn: main theorem eqn 2}.

Secondly, the random probing sensing model \eqref{eqn: random probing} can be decomposed into
\begin{align}
\A=\P_2\diag(\tg)\Q \label{eqn: our random probing},
\end{align}
where $\P_2=\one_L^T\otimes\F^*$ is row-wise concatenation of $L$ inverse DFT matrices, $\tg:=\begin{bmatrix}\g_0^T & \g_1^T & \cdots & \g_{L-1}^T \end{bmatrix}$ and $\Q=\I_L\otimes\F_{(1:q)}\in\C^{\tn\times n}$ ($\tn=mL\geq n=qL$, $L=\bigo(1)$). Here, $\P_2$ is a UTF \eqref{eqn: p2}, and $\Q$ is a bounded column-wise orthonormal matrix with $\mu(\Q)=\frac{1}{\sqrt{m}}=\sqrt{\frac{L}{\tn}}$. Suppose each $\g_i$ is an independently subgaussian random vectors, an application of Theorem \ref{thm: the main theorem} leads to a better bound on the number of measurements than the existing results.

Similarly, the compressive multiplexing sensing model \eqref{eqn: compressive multiplexing} can be written as
\begin{align}\label{eqn: general compressive mux}
\A=\P_3\diag(\tilde{\ssigma})\B,
\end{align}
where $\P_3=\one_L^T\otimes\I_m\in\C^{m\times n}$ ($n=mL$, $L=\bigo(1)$) is row-wise concatenation of $L$ identity matrices, $\tilde{\ssigma}:=\begin{bmatrix}\ssigma_0^T & \ssigma_1^T & \cdots & \ssigma_{L-1}^T \end{bmatrix}$ with $\ssigma_i\in\C^m$ being independent Rademacher vectors and $\B=\I_L\otimes \F$ with $\F$ being an $m\times m$ normalized DFT matrix. Here, $\P_3$ is a UTF \eqref{eqn: p3}.

Besides achieving tighter RIP bounds, another benefit of analyzing these models in our framework is that these bounds still holds when the third decomposed part (the unitary matrix or column-wise orthonormal matrix) is replaced by any bounded unitary matrix (or column-wise orthonormal matrix). In this way, the above sensing models can be generalized and applied in more applications. For example, consider an image that is sparse in a basis $\PPsi\in\C^{n\times n}$ and $\PPsi$ is bounded unitary, then an imaging system by the sensing model \eqref{eqn: general compressive mux} with $\B=\PPsi$ first divides the $n$-pixel image into $L$ subimages, each of which is then randomly modulated by a Rademacher vector before combing onto a single detector of $m$ pixels. Similarly, we can compress $L$ $m$-pixel images ($n=mL$ pixels in total) into one image provided that each image is sparse on a bounded bases $\PPsi_i\in\C^{m\times m}$, $i=0,...,L-1$.

\section{Design of new sensing models}\label{sec: new applications}
In this section, we apply the general framework of Section \ref{sec: main results} to construct new structured sensing models and draw comparisons to existing literature where relevant.

We first show that random subsamplers in many existing structured sensing models can be replaced by arbitrary/deterministic subsamplers by noticing that any partial Fourier matrix is a UTF. This idea is then extended to a construction of fast and efficient random block diagonal matrices (Section \ref{sec: random to deterministic sub}).

Suppose $\B_1, \B_2, ..., \B_l$ is a set of arbitrary unitary matrix, then we can easily obtain the RIP analysis on $\A=\U\D\B_1\B_2\cdots\B_l$ by noticing that the product of any unitary matrices is still unitary. We construct the other two sensing models based on this observation: in Section \ref{sec: Convolutional CS with Deterministic Phase}, we demonstrate that the combination of deterministic phase modulations with partial random circulant matrices brings the new sensing matrices the compatibility with more sparsifying bases, and hence more practical applications; in the last part, we propose another sensing model and discuss a natural application of this model for the channel estimation of OFDM systems. This scheme can supersede previous CS based methods due to its capability to achieve a low Peak-to-Average Power Ratio (PAPR) and a low sampling rate simultaneously.

We note that construction of new structured sensing models based on the proposed framework is not limited to those included in this section. Our setup provides a simple and general design mechanism for structured sensing models due to the existence of multiple ways on constructing a UTF and a column-wise orthonormal matrix. Design of more structured sensing models by our framework is an interesting possible future direction.

\subsection{(Block) CS with arbitrary/deterministic subsamplers}\label{sec: random to deterministic sub}
Consider the following sensing model
\begin{align}\label{eqn: deterministic subsampler}
\A=\sqrt{\frac{n}{m}}\R_{\OOmega}\Emtx\D\PPsi,
\end{align}
where $\Emtx\in\C^{n\times n}$ is a normalized DFT or Hadamard matrix and $\PPsi\in\C^{n\times n}$ is an arbitrary unitary matrix. By Proposition \ref{prop: UTF}, it can be seen that $\U=\sqrt{\frac{n}{m}}\R_{\OOmega}\Emtx$ is a UTF. Then, Theorem \ref{thm: the main theorem} implies that this sensing model satisfies the RIP with high probability if
\begin{align}\label{eqn: arbitrary subsampling}
m\geq c\delta^{-2}sn\mu^2(\PPsi)\log^2s\log^2n.
\end{align}

As we mentioned in Section \ref{sec: review of structured sensing}, it has been shown that $\A=\sqrt{\frac{n}{m}}\R_{\OOmega'}\PPsi$ satisfies the RIP with high probability if $m\geq c\delta^{-2}sn\mu^2(\PPsi)\log^4n$ \cite{rauhut2010compressive}, which indicates similar requirement on $m$ as \eqref{eqn: arbitrary subsampling}. Hence, for many existing structured sensing models \cite{EJC06nearoptimal,rudelson2008sparse,romberg2009compressive,li2013convolutional,puy2012universal} encompassed in the framework \cite{rauhut2010compressive}, a replacement of the random subsampling operators $\R_{\OOmega'}$ by $\R_{\OOmega}\Emtx\D$ does not change the recovery performance of the new sensing model. Moreover, the arbitrary/deterministic subsampling operator can bring the new model advantages in practical implementations. For example, it is preferable to consider non-random measurements in the Fourier plane in realistic data acquisitions such as radio interferometry \cite{wiaux2009spread} and magnetic resonance imaging (MRI) \cite{wiaux2010spread,puy2012spread}.

In a similar way, we can construct random block diagonal matrices which support fast matrix multiplication. The measurement matrix is as below.
\begin{align*}
\tilde{\PPhi}&=\diag([\tilde{\PPhi}_i]_{i=0}^{L-1})=\begin{bmatrix}
                            \tilde{\PPhi}_0 &  &  &  \\
                             & \tilde{\PPhi}_1 &  &  \\
                             &  & \ddots &  \\
                             &  &  & \tilde{\PPhi}_{L-1}
                          \end{bmatrix}, \\
\tilde{\PPhi}_i&=\sqrt{\frac{q}{p}}\R_{\OOmega}\Emtx\diag(\bxi_i), \quad i=0,\cdots,L-1,
\end{align*}
where $\R_{\OOmega}: \C^{q}\rightarrow\C^{p}$ is an arbitrary/deterministic subsampling operator, $\Emtx$ is a $q\times q$ normalized DFT or Hadamard matrix and $\{\bxi_i\}$ are independent length-$q$ sub-Gaussian random vectors. We can easily prove the RIP of $\tilde{\A}=\tilde{\PPhi}\PPsi$ in our framework by noticing that
\begin{align*}
\tilde{\PPhi}&=\P_4\diag(\tilde{\bxi}),\\
\tilde{\bxi}&=\begin{bmatrix}\bxi_0^T & \bxi_1^T & \cdots & \bxi_{L-1}^T\end{bmatrix}^T,
\end{align*}
where $\P_4=\sqrt{\frac{q}{p}}\I_L\otimes(\R_{\OOmega}\F)$ is a UTF by Proposition \ref{prop: UTF}. Theorem \ref{thm: the main theorem} then indicates that $\tilde{\A}$ satisfies the RIP with high probability when
\begin{align*}
m\geq c_4\delta^{-2}sn\mu^2(\PPsi)\log^2s\log^2n,
\end{align*}
where $m=pL\leq n=qL$ and $L=\bigo(1)$. When the sparsifying basis $\PPsi$ is a bounded unitary matrix, our construction requires less memory and computations than existing random block diagonal matrices \cite{eftekhari2012restricted}.

\subsection{Convolutional CS with deterministic phase modulation}\label{sec: Convolutional CS with Deterministic Phase}
As we have reviewed in earlier section, the partial random circulant sensing model only provides good recovery guarantee for sparse signals in the identity basis. Here, we propose a new convolution-based CS scheme that not only retains the arbitrary/deterministic subsampling feature, but also exhibits compatibility with various sparsifying transforms without introducing additional randomness. Specifically, our scheme modulates the signal with a deterministic sequence prior to the partial random circulant matrix.

We denote $\LLamda=\diag(\llamda)$ a unitary diagonal matrix of size $n\times n$, i.e. $|\llamda_i|=1$ for all $0\leq i<n$. Our sensing matrix is as below
\begin{align*}
    \PPhi=\frac{1}{\sqrt{m}}\R_{\OOmega}\H_{\eps}\LLamda,
\end{align*}
where $\H_{\eps}$ denotes the circulant matrix generated from $\eps$, and $\R_{\OOmega}: \C^n \rightarrow\C^m$ represents an arbitrary/deterministic subsampling operator. Suppose $\eps=\F^*\bxi$, where $\bxi$ is a length-$n$ random vector with independent, zero-mean, unit-variance, and sub-Gaussian entries. Let $\D=\diag(\bxi)$, it follows that
\begin{align*}
    \PPhi=\sqrt{\frac{n}{m}}\R_{\OOmega}\F^*\D\F\LLamda.
\end{align*}
Consider a combined matrix $\A=\PPhi\PPsi$, where $\PPsi$ denotes the sparsifying basis, it can be observed that $\U=\sqrt{\frac{n}{m}}\R_{\OOmega}\F^*$ is a UTF and $\B=\F\LLamda\PPsi$ is a unitary matrix. By Theorem \ref{thm: the main theorem}, $\A$ satisfies the RIP with high probability if
\begin{align*}
m\geq c_4\delta^{-2}sn\mu^2(\F\LLamda\PPsi)\log^2s \log^2n.
\end{align*}
This result provides a generic framework for the RIP analysis on any sampling scheme that involves a partial random circulant matrix followed by a deterministic phase modulation and any orthornormal basis. For such sampling schemes, it implies that the recovery performance under nonlinear optimization such as $l_1$-minimization solely depends on the coherence parameter $\mu(\F\LLamda\PPsi)$.

In the case of no phase modulation, we can easily verify that the partial random circulant matrix provides a faithful recovery for sparse signals in the identity basis with the number of measurement $m\geq c\delta^{-2}s(\log s)^2(\log n)^2$ since the coherence parameter becomes $\mu(\F\I\I)=1/\sqrt{n}$. This conclusion coincides with that in \cite{krahmer2014suprema}. Similarly, the incompatibility of a partial random circulant matrix with sparse signals in the Fourier basis ($\PPsi=\F^*$) can be explained by Theorem \ref{thm: the main theorem}; the coherence parameter is $\mu(\F\I\F^*)=1$.

Our problem now is reduced to finding a proper modulation sequence such that in certain orthonormal bases, the corresponding coherence parameter is $\bigo(1/\sqrt{n})$. Next, we propose using Golay sequences for the phase modulations and demonstrate the performance of the corresponding sampling scheme for various orthonormal bases by analyzing the coherence parameter. To begin with, we briefly review the definition of Golay sequences.
\begin{dfn}[\cite{golay1961complementary}]
Consider two length-$n$ bipolar sequences $\a=[a_0,...,a_{n-1}]$, $\b=[b_0,...,b_{n-1}]$. Define two polynomials $A(z)=\sum_{k=0}^{n-1}\a_k z^k$ and $B(z)=\sum_{k=0}^{n-1}\b_k z^k$. $\a$ and $\b$ are said to be a Golay complementary pair if
\begin{align*}
|A(z)|^2+|B(z)|^2=2n
\end{align*}
for all $z$ on the unit circle, i.e. $|z|=1$.
\end{dfn}
This immediately gives us
\begin{align}
|A(z)|\leq\sqrt{2n}.\label{eqn: golay ploy bound}
\end{align}
Suppose that $\LLamda$ is a diagonal matrix whose diagonal entries form a Golay sequence. For $\PPsi=\I$, it can be easily shown that the coherence parameter gives $\mu(\F\LLamda\I)=1/\sqrt{n}$. In the case of $\PPsi^*$ is Fourier, DCT, block DCT, or Haar wavelet transform, the analysis on the coherence parameters has been studied in \cite{li2013convolutional,gan2013golay}.
\begin{lemma}[\cite{gan2013golay}, Lemma 1 and 2. \cite{li2013convolutional}, Corollary 1.]\label{lem: three coherences}
Denote by $\Cm$ the Type-II DCT transform, $\hat{\Cm}$ the block DCT transform and $\W$ the Haar wavelet transform.
For any Golay sequence $\llamda$, we have
\begin{align}
& \mu(\F\LLamda\F^*)\leq \sqrt{\frac{2}{n}} \nonumber\\
& \mu(\F\LLamda\Cm^*)\leq \frac{2}{\sqrt{n}} \nonumber\\
& \mu(\F\LLamda\hat{\Cm}^*)\leq \frac{2}{\sqrt{n}} \nonumber
\end{align}
If the Golay sequence are constructed by the Rudin-Shapiro iterative process \cite{golay1961complementary},
\begin{align}
& \mu(\F\LLamda\W^*)\leq \sqrt{\frac{2}{n}}. \label{eqn: golay wavelet}
\end{align}
\end{lemma}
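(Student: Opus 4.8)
The plan is to reduce each coherence bound to an estimate on the magnitude of a single entry of the product matrix $\F\LLamda\PPsi^*$, and then to exploit the Golay property \eqref{eqn: golay ploy bound} after recognizing that each such entry can be written as a Golay polynomial evaluated on the unit circle (or a short sum of such evaluations). Write $\LLamda=\diag(\llamda)$ with $\llamda$ a length-$n$ Golay sequence and associate to it the polynomial $\Lambda(z)=\sum_{k=0}^{n-1}\llamda_k z^k$, so that by \eqref{eqn: golay ploy bound}, $|\Lambda(z)|\leq\sqrt{2n}$ for every $z$ with $|z|=1$. The generic entry of $\F\LLamda\PPsi^*$ is $(\F\LLamda\PPsi^*)_{jk}=\frac{1}{\sqrt{n}}\sum_{l}\ex^{-2\pi i jl/n}\llamda_l\overline{\PPsi_{lk}}$, so the whole argument consists of identifying, for each transform $\PPsi\in\{\F,\Cm,\hat{\Cm},\W\}$, the appropriate substitution $z=\ex^{-2\pi i t/n}$ that turns this sum into $\Lambda(z)$ (or a sum of two or four shifted/reflected copies of it).

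For the Fourier case $\PPsi=\F$, the columns of $\F^*$ are pure exponentials, so $(\F\LLamda\F^*)_{jk}=\frac{1}{\sqrt{n}}\sum_{l}\llamda_l\ex^{-2\pi i (j-k)l/n}=\frac{1}{\sqrt{n}}\Lambda(\ex^{-2\pi i(j-k)/n})$, and \eqref{eqn: golay ploy bound} immediately gives $|(\F\LLamda\F^*)_{jk}|\leq\frac{1}{\sqrt{n}}\sqrt{2n}=\sqrt{2/n}$, which is the first claim. For the DCT cases, I would use the standard representation of a DCT-II basis vector as a real linear combination of two conjugate complex exponentials, $\cos(\cdot)=\frac12(\ex^{i\cdot}+\ex^{-i\cdot})$; then $(\F\LLamda\Cm^*)_{jk}$ splits into $\frac{1}{2\sqrt n}$ times the sum of two Golay-polynomial evaluations at points on the unit circle, each bounded by $\sqrt{2n}$, giving the factor $2/\sqrt n$. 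The block DCT case $\hat{\Cm}$ is handled identically block-by-block, with the block length absorbed cleanly because the Golay polynomial bound is uniform over the unit circle and restricting to a sub-block of indices only restricts the summation range — one has to check that the relevant partial sum is still the evaluation of a Golay polynomial of the appropriate (shorter, but still Golay) sub-sequence, which holds when the blocks are themselves built from Golay sequences, or alternatively bound the partial sum by the full one using the complementary-pair structure. The Haar wavelet case \eqref{eqn: golay wavelet} is the most delicate: a Haar basis vector is (up to normalization) the indicator of a dyadic interval with a sign flip at its midpoint, so $(\F\LLamda\W^*)_{jk}$ becomes a difference of two partial sums of $\llamda_l\ex^{-2\pi i jl/n}$ over consecutive dyadic half-intervals. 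Here I would invoke the Rudin–Shapiro construction specifically: the recursive structure of Rudin–Shapiro Golay pairs means that the restriction of $\llamda$ to a dyadic interval is, after a sign/shift normalization, again one member of a Golay complementary pair at the corresponding scale, so each partial sum is the evaluation of a scale-$2^j$ Golay polynomial and is bounded by $\sqrt{2\cdot 2^j}$; combining over the two half-intervals and accounting for the Haar normalization $2^{j/2}/\sqrt n$ yields $\sqrt{2/n}$.

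The main obstacle is the wavelet bound \eqref{eqn: golay wavelet}: unlike the Fourier and DCT entries, a Haar entry is not a single Golay polynomial evaluation but involves \emph{partial} sums of the modulated sequence over dyadic blocks, and a generic Golay sequence gives no control on its partial sums. Everything hinges on the self-similarity of the Rudin–Shapiro iteration — that a dyadic restriction of a Rudin–Shapiro Golay sequence is itself (a signed shift of) a Rudin–Shapiro Golay sequence at the smaller scale — so that the relevant partial sum inherits a $\sqrt{2\cdot\text{length}}$ bound. I would isolate this as the key combinatorial lemma, prove it by induction on the iteration depth using the defining recursion $(\a,\b)\mapsto(\a\b,\a\bar\b)$ for Rudin–Shapiro pairs, and then assemble the three exponential-sum estimates above; the Fourier and DCT parts are essentially bookkeeping once the substitution $z=\ex^{-2\pi i t/n}$ is in place. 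Since the statement attributes these bounds to \cite{gan2013golay} and \cite{li2013convolutional}, I would present the proof as a streamlined recollection of those arguments, emphasizing the uniform unit-circle bound \eqref{eqn: golay ploy bound} as the common engine.
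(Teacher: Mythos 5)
Your overall strategy --- reduce each entry of $\F\LLamda\PPsi^*$ to a Golay-polynomial evaluation on the unit circle and invoke \eqref{eqn: golay ploy bound} --- is the same engine the paper relies on. The Fourier and DCT parts, which the paper does not reprove (they are cited from \cite{gan2013golay,li2013convolutional}), are fine apart from a bookkeeping slip: your prefactor $\frac{1}{2\sqrt n}$ times two evaluations bounded by $\sqrt{2n}$ gives $\sqrt 2$, not $2/\sqrt n$; you need to carry the DCT-II normalization $\sqrt{2/n}$ on the cosines to land on the stated $2/\sqrt n$. The substantive issue is in the Haar case \eqref{eqn: golay wavelet}, the only part the paper actually proves (Appendix \ref{app: coherence analysis}). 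You correctly isolate the key combinatorial fact --- the dyadic self-similarity of the Rudin--Shapiro iteration, so that each dyadic segment of $\llamda$ is again Golay --- but your assembly loses a constant: bounding the two half-interval partial sums separately by $\sqrt{2\cdot 2^{d-l-1}}$ and adding them via the triangle inequality yields, with support length $2^{d-l}$ and Haar normalization $1/\sqrt{2^{d-l}}$, the estimate $\frac{1}{\sqrt n}\cdot\frac{1}{\sqrt{2^{d-l}}}\cdot 2\sqrt{2\cdot 2^{d-l-1}}=\frac{2}{\sqrt n}$, which is weaker by a factor $\sqrt 2$ than the claimed $\sqrt{2/n}$.

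The paper closes this gap with one extra observation that your proposal is missing: not only is each dyadic segment $\llamda_i$ Golay, but the \emph{sign-modulated} segment $\llamda_i\odot\ggamma_{2^{d-l}}$ (the segment multiplied pointwise by the Haar pattern of $2^{d-l-1}$ ones followed by $2^{d-l-1}$ minus-ones) is itself a single Golay sequence --- under the Rudin--Shapiro recursion $P_{k+1}=[P_k;Q_k]$, $Q_{k+1}=[P_k;-Q_k]$ it is exactly the complementary partner of $\llamda_i$. Hence the entire signed sum over the support is one Golay-polynomial evaluation, bounded by $\sqrt{2\cdot 2^{d-l}}$, and the constant survives: $\frac{1}{\sqrt n}\cdot\frac{1}{\sqrt{2^{d-l}}}\cdot\sqrt{2\cdot 2^{d-l}}=\sqrt{2/n}$. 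So the fix is to replace ``two partial sums plus triangle inequality'' by ``one Golay evaluation of the sign-modulated segment''; with that change your induction on the iteration depth reproduces the paper's argument. (Your weaker bound $2/\sqrt n$ would still suffice for every application of this lemma in the paper, since only the order $\bigo(1/\sqrt n)$ of the coherence matters, but it does not establish the inequality as stated.)
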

The proof on \eqref{eqn: golay wavelet} was omitted in \cite{gan2013golay}. For completeness, we include the proof in Appendix \ref{app: coherence analysis}.
%

The additional Golay modulation process in our scheme can be easily implemented for the reason that a Golay sequence is simply a pseudorandom bipolar sequence. Actually, this process can be regarded as a phase modulation process, where only the two bipolar phases (i.e. +1, -1) are required. For the coded aperture imaging described in \cite{romberg2010sparse,ma2009single,rauhut2012restricted} a simple pre-modulation process by a Golay sequence extends the ability of the imaging system to handle images that are sparse in more bases. How to design the deterministic phase modulations such that the coherence parameter $\mu(\F\LLamda\PPsi)$ is small for other sparsifying bases $\PPsi$ is an interesting open problem.
\subsection{OFDM channel estimation with low speed ADC and low PAPR}\label{sec: ofdm channel estimation}
\begin{figure*}
\begin{center}
\begin{tikzpicture}
\draw [->, thick] (0,0.5) -- (0.5,0.5);
\node [left] at (0,0.5) {$\llamda_{n-1}$};
\draw [fill=black] (0.25,1) circle [radius=0.05];
\draw [fill=black] (0.25,1.5) circle [radius=0.05];
\draw [fill=black] (0.25,2) circle [radius=0.05];
\draw [->, thick] (0,2.5) -- (0.5,2.5);
\node [left] at (0,2.5) {$\llamda_{1}$};
\draw [->, thick] (0,3) -- (0.5,3);
\node [left] at (0,3) {$\llamda_{0}$};
\draw [fill=white, thick] (0.5,0) rectangle (1.5,3.5);
\node at (1,2) {IDFT};
\draw [->, thick] (1.5,0.5) -- (2,0.5);
\draw [fill=black] (1.75,1) circle [radius=0.05];
\draw [fill=black] (1.75,1.5) circle [radius=0.05];
\draw [fill=black] (1.75,2) circle [radius=0.05];
\draw [->, thick] (1.5,2.5) -- (2,2.5);
\draw [->, thick] (1.5,3) -- (2,3);
\draw [fill=white, thick] (2,0) rectangle (3,3.5);
\node at (2.5,2) {P/S};
\draw [->, thick] (3,2) -- (3.5,2);
\draw [fill=white, thick] (3.5,1.5) rectangle (4.5,2.5);
\node at (4,2.2) {DAC};
\node at (4,1.8) {$n$ Hz};
\draw [->, thick] (4.5,2) -- (5,2);
\draw [fill=white, thick] (5,1.75) rectangle (6.5,2.25);
\node at (5.75,2) {Channel};
\draw [->, thick] (6.5,2) -- (7.5,2);
\node [below]at (7,2) {$f(t)$};
\draw [fill=white, thick] (7.8,2) circle [radius=0.3];
\draw [thick] (7.59,1.79) -- (8.01,2.21);
\draw [thick] (7.59,2.21) -- (8.01,1.79);
\draw [->, thick] (8.1,2) -- (9.6,2);
\node [above] at (8.85,2) {$f(t)\cdot p(t)$};
\draw [->, thick] (7.8,2.8) -- (7.8,2.3);
\node [above] at (7.8,2.8) {$p(t)$};
\draw [fill=white, thick] (9.6,1.75) rectangle (11.5,2.25);
\node at (10.5,2) {Integrator};
\draw [->, thick] (11.5,2) -- (12.1,2);
\draw [fill=white, thick] (12.1,1.5) rectangle (13.1,2.5);
\node at (12.6,2.2) {ADC};
\node at (12.6,1.8) {$m$ Hz};
\draw [->, thick] (13.1,2) -- (13.6,2);
\node [right] at (13.6,2) {$\y$};
\end{tikzpicture}
\end{center}
\caption{Block Diagram for OFDM Channel Estimation}\label{fig: block diagram}
\end{figure*}
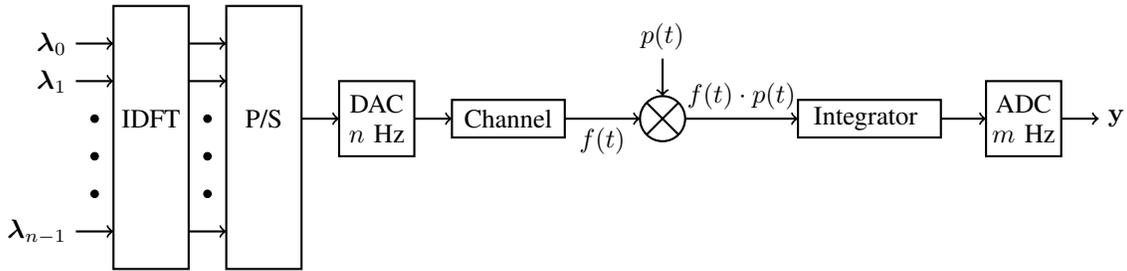
In this part, we present a novel CS-based OFDM channel estimation scheme. Our channel estimation scheme includes two key ingredients: a pilot signal generated by a Golay sequence and a random demodulator. Figure \ref{fig: block diagram} displays a block diagram for the scheme. We denote $\llamda$ a length-$n$ Golay sequence, which is employed as the pilot signal. The discrete signal is then passed through an $n$-point IDFT transform before being converted by a digital-to-analog converter (DAC) with a clock speed of $n$ Hz into an analog signal. At the receiver, the convolution of the transmitted signal and the channel response is sampled by a random demodulator. More specifically, the received signal is multiplied by a high-rate pseudonoise sequence, then integrated by a low-pass anti-aliasing filter (the integrator). The discrete samples are captured by a low rate ADC with a clock speed of $m$ Hz. We note that the cyclic prefix addition and removal block of the OFDM system are omitted in the block diagram.

Let $\x\in\C^{n}$ denote the channel response vector with $s$ taps, $\ssigma\in\C^n$ be the chipping sequence (e.g. a Rademacher vector) of the random demodulator and $\y\in\C^{m}$ represent the received samples. Then, the matrix form of our scheme can be expressed as
\begin{align*}
\y=\P_1\SSigma\F^*\LLamda\F\x+\w,
\end{align*}
where $\P_1=\I_m\otimes\one_q^T$, $n=mq$, $\SSigma=\diag(\ssigma)$, $\LLamda=\diag(\llamda)$ and $\w\in\C^n$ is the noise vector. Clearly, this model is in the UDB framework since $\P_1$ is a UTF. Due to the fact that $\mu(\F^*\LLamda\F)\leq\sqrt{\frac{2}{n}}$ (Lemma \ref{lem: three coherences}), this scheme guarantees stable channel estimation performance if
\begin{align*}
m\geq c_4\delta^{-2}s\log^2s \log^2n.
\end{align*}
When the channel is sparse, this result indicates that an $n$-resolution channel can be faithfully estimated by a low rate ADC (a clock speed of $m$ Hz).

\emph{Comparison with existing CS approaches:} In \cite{haupt2010toeplitz,berger2010sparse,meng2012compressive}, the pilot signals were generated by random sequences. Although only a low rate ADC is required at the receiver, the PAPR of the random sequences is asymptotically $\log n$ with probability $1$ \cite{sharif2004multicarrier}, which results in difficulty in the transmitter design in an OFDM system. In \cite{li2013convolutional}, the pilot signals were designed by Golay sequences, in which case the associated PAPR is bounded by 2. However, this scheme requires a random downsampling operator at the receiver, which cannot truly satisfies the requirement of low sampling frequencies due to the possibility of consecutive sampling. The combination of a Golay sequence and a random demodulator in our scheme resolves this dilemma; it achieves a low PAPR and a low sampling rate simultaneously. The only tradeoff in our scheme might be the implementation of the chipping sequence. However, we have seen end-to-end simulations of a transistor-level implementation \cite{laska2007theory} and practical circuit designs for the receivers with random demodulators \cite{Yoo2012subNyquist}. In \cite{Yoo2012subNyquist}, an effective instantaneous bandwidth of $2$ GHz is achieved with an aggregate digitization rate $f_s=320$ MSPS.
\begin{remark}[Random demodulation with deterministic phase modulation]
We note that the above structure can be regarded as a special case of the following sensing model
\begin{align*}
\A=\P_1\SSigma\F\LLamda\PPsi,
\end{align*}
where $\LLamda=\diag(\tilde{\llamda})$ is a unitary diagonal matrix of size $n\times n$, i.e. $|\tilde{\llamda}_i|=1$ for all $0\leq i<n$ and $\PPsi$ is an arbitrary unitary matrix. Clearly, this model is in the UDB framework, and it satisfies the RIP with high probability when
\begin{align*}
m\geq c_4\delta^{-2}sn\mu^2(\F\LLamda\PPsi)\log^2s \log^2n.
\end{align*}
\end{remark}

\section{Simulations}\label{sec: simulations}

In this section, we demonstrate the performance of the sensing models proposed in Section \ref{sec: Convolutional CS with Deterministic Phase} and \ref{sec: ofdm channel estimation}.

The first simulation demonstrates the improvement on the performance of partial random circulant matrices with the addition of Golay phase modulations. Figure \ref{fig: simulation1} shows the simulation results of compression and recovery on two different $256\times 256$ images based on existing and the proposed convolution-based CS models. We employ the sparsify averaging prior and the re-weighted BPDN from \cite{carrillo2012sparsity}. We set the input SNR as $30$ dB for both images and the down sampling ratio $r=\frac{1}{4}$. In the caption, R+R denotes the random convolution scheme consisting a random subsampling operator and a random sequence \cite{romberg2009compressive}, D+R represents the partial random circulant matrix constructed by an arbitrary/deterministic subsampling operator and a random sequence \cite{krahmer2014suprema}, R+E-Golay is the construction of a random subsampling operator and an extended Golay sequence \cite{li2013convolutional} and D+R+Golay-PM is our measurement scheme constructed by adding a Golay phase modulation to the partial random circulant matrix (D+R). The results show that the D+R scheme performs poorly on the recovery of the images. However, with a simple Golay phase modulation process, the proposed scheme exhibits comparable performance to the R+R and R+E-Golay schemes. Moreover, our scheme possesses the advantages of both the arbitrary/deterministic subsampling and the compatibility with varies sparsifying bases.

In the second simulation, we compare the performance of our proposed OFDM channel estimation scheme with those given by existing CS based methods. We set the number of carriers as $N=1024$ and collect $M=64$ samples at the receiver side. The channel model is the ATTC (Advanced Television Technology Center) and the Grande Alliance DTV laboratory ensemble E model. Here, the static case impulse response $\x(n)$ can be written as \cite{coleri2002channel}
\begin{align*}
\x&=\delta(n)+0.3162\delta(n-2)+0.1995\delta(n-17)\\
  &\quad +0.1296\delta(n-36)+0.1\delta(n-75)+0.1\delta(n-137).
\end{align*}
We vary the input SNR from $0$ dB to $30$ dB, and run $1000$ trials for each input SNR using the subspace pursuit algorithm \cite{dai2008subspace}. In Figure \ref{fig: simulation2}, the channel estimation performances based on three different schemes are shown. The `Random Phase' plot represents the performance by the channel estimation method proposed in \cite{meng2012compressive}, where the pilot signal is generated by a random vector and the received signal is sampled by a low-rate ADC. The `Golay+Rand subsampling' one indicates the performance by the method proposed in \cite{li2013convolutional}, where now the pilot signal is from a Golay sequence and the received signal needs to be randomly subsampled. The `Golay+Det subsampling' one is the performance given by our proposed scheme. It can be seen that all of these three schemes reveal similar reconstruction performance. However, our scheme achieves both a low PAPR and a low sampling rate simultaneously.
\begin{figure*}
\centering
  \centerline{\includegraphics[width=\textwidth]{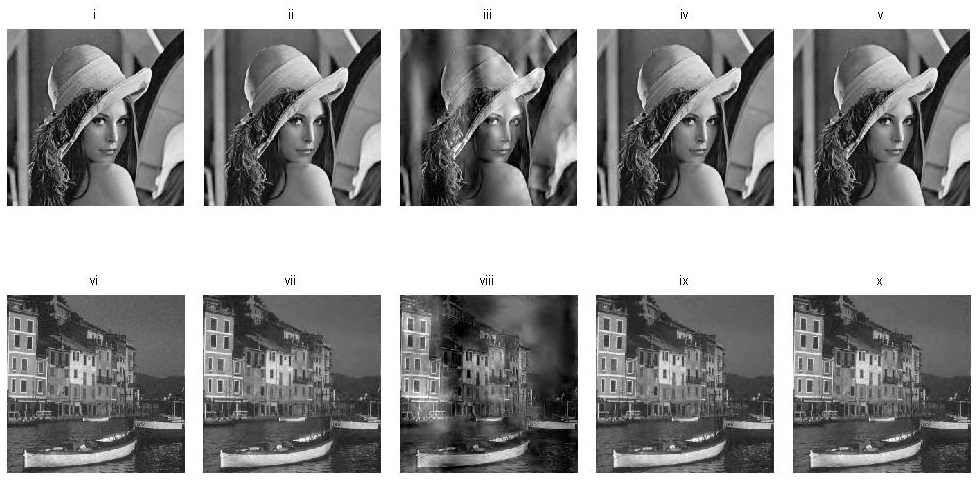}}
\caption{Partial Random Circulant Matrix with Golay Modulations. (i) Original image;(ii) R+R: $\text{SNR}=29.5703$ dB;(iii) D+R: $\text{SNR}=11.0224$ dB;(iv) R+E-Golay: $\text{SNR}=29.3712$ dB;(v) D+R+Golay-PM: $\text{SNR}=29.5606$ dB;(vi) Original image;(vii) R+R: $\text{SNR}=24.7431$ dB;(viii) D+R: $\text{SNR}=9.2057$ dB;(ix) R+E-Golay: $\text{SNR}=24.7586$ dB;(x) D+R+Golay-PM: $\text{SNR}=24.6195$ dB.}\label{fig: simulation1}
\end{figure*}
\begin{figure}
\includegraphics[width=0.5\textwidth]{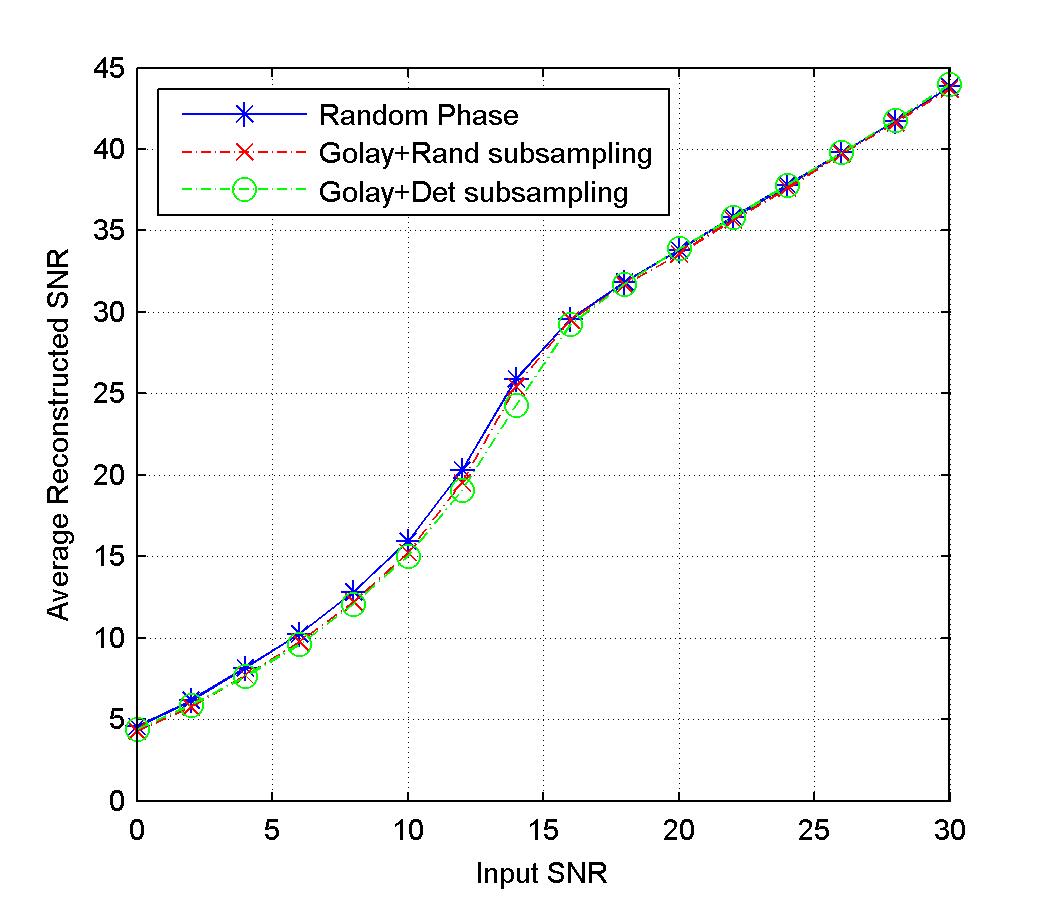}
\caption{Performance comparison of different CS-based OFDM channel estimation methods}\label{fig: simulation2}
\end{figure}

\section{Conclusion}\label{sec: conclusion}
In this paper, we proposed a generic CS framework for the construction of structured sensing models and proved its RIP based on the estimates of a suprema of chaos processes of a certain type. We have demonstrated that our framework is general and encompasses many existing/new structured sensing models. For any sensing model that involves selecting each measurement vector independently from each other, we provided a universal way to transform it into one with arbitrary/deterministic sampling operators. Moreover, we have proposed other structured sensing models that can motivate better practical implementation schemes, including distributed sensing, imaging and channel estimation. In particular, our OFDM channel estimation scheme outperforms existing CS-based methods by offering a low PAPR and a low sampling rate to the OFDM system simultaneously.
\appendices
\section{Proof of the Main Theorem}\label{appndix: proof of main}
In this section, we first review definitions and a useful lemma about covering number estimates, then present the proof.
\subsection{Covering number estimates}
A metric space is denoted by $(T,d)$, where $T$ is a set and $d$ is the notion of distance (metric) between elements of the set. For example, $(\mset,\|\cdot\|_{\ttwo})$ is a metric space where the matrices in the set $\mset$ have a distance measured by the operator norm, i.e. $d(\A_1,\A_2)=\|\A_1-\A_2\|_{\ttwo}$. For a metric space $(T,d)$, the covering number $N(T,d,u)$ is the minimal number of open balls of radius $u$ needed to cover $(T,d)$.

Define the set of all $s$-sparse signals with unit norm as
\begin{align}
\sset=\{\x\in\C^n: \|\x\|_2=1, \|\x\|_0\leq s\}.
\end{align}

The following lemma is summarized from the results in Section 8 of \cite{rauhut2010compressive}.
\begin{lemma}[\cite{rauhut2010compressive}]\label{lem: covering number estimates 0}
Let $\x_0, \x_1, \cdots, \x_{\tn-1}$ be vectors in $\C^n$ with $\|\x_i\|_{\infty}\leq K$ for $i\in[\tn]$. Consider the semi-norm
\begin{align}\label{eqn: semi norm}
\|\u\|_{X}:=\max_{i\in[\tn]}|\langle\x_i,\u\rangle|, \quad \u\in\C^n,
\end{align}
then we have the following two estimates on the covering number $N(\sset,\|\cdot\|_{X},t)$
\begin{align*}
& \sqrt{\log N(\sset,\|\cdot\|_{X},t)} \\
& \quad \leq 3K\sqrt{2s}\sqrt{\log(10\tn)\log(4n)}t^{-1}, \quad 0<t\leq 2K\sqrt{s},\\
& \sqrt{\log N(\sset,\|\cdot\|_{X},t)} \\
& \quad \leq \sqrt{2s}\sqrt{\log(\frac{en}{s})+\log(1+2K\sqrt{s}/t)}, \quad t>0.
\end{align*}
\end{lemma}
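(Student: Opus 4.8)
The plan is to establish the two covering-number estimates by two complementary arguments: a crude volumetric bound that is efficient when $t$ is comparable to the diameter of $\sset$ (the second estimate), and Maurey's empirical method, which produces the sharper $t^{-1}$ decay needed when $t$ is small (the first estimate). Both are standard ingredients in the analysis of Section 8 of \cite{rauhut2010compressive}, and I would assemble them so as to match the stated forms.

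For the second estimate I would first split $\sset$ according to the support of its elements. Since $\|\x\|_2=1$ and $\|\x\|_0\leq s$, each $\x\in\sset$ is supported on one of at most $\binom{n}{s}\leq(\ex n/s)^s$ index sets $S$ of size $s$, and on each such $S$ the feasible vectors lie in the Euclidean unit sphere of the $s$-dimensional complex subspace $\C^S$. On $\C^S$, Cauchy--Schwarz together with $\|\x_i\|_\infty\leq K$ gives $|\langle\x_i,\u\rangle|=|\langle(\x_i)_S,\u\rangle|\leq K\sqrt{s}\,\|\u\|_2$, so $\|\cdot\|_X\leq K\sqrt{s}\,\|\cdot\|_2$ there. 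Hence a $t$-cover in $\|\cdot\|_X$ is implied by a $(t/(K\sqrt{s}))$-cover in $\|\cdot\|_2$, and the standard volumetric estimate for the $2s$-real-dimensional ball yields at most $(1+2K\sqrt{s}/t)^{2s}$ balls per support. Multiplying by $(\ex n/s)^s$, taking logarithms, and absorbing $s\log(\ex n/s)$ into $2s\log(\ex n/s)$ gives the claimed inequality after a square root.

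For the first estimate I would invoke Maurey's empirical method. Fix $\u\in\sset$; since $\|\u\|_2=1$ and $\u$ is $s$-sparse, $\|\u\|_1\leq\sqrt{s}$. Define a random vector $\z$ taking the value $\sqrt{s}\,(u_k/|u_k|)\,\e_k$ with probability $|u_k|/\sqrt{s}$ for each $k$ in the support of $\u$, and $\z=\zero$ with the remaining probability $1-\|\u\|_1/\sqrt{s}\geq 0$; then $\E\z=\u$. Drawing i.i.d.\ copies $\z_1,\dots,\z_M$ and forming $\bar\z=\tfrac1M\sum_{l=1}^M\z_l$, the goal is to show $\E\|\bar\z-\u\|_X$ is small, so that some realisation of $\bar\z$ lies within $t$ of $\u$. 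I would bound $\E\|\bar\z-\u\|_X=\E\max_i|\tfrac1M\sum_l\langle\x_i,\z_l-\u\rangle|$ by symmetrisation, replacing $\z_l-\u$ with $\epsilon_l\z_l$ for Rademacher $\epsilon_l$ at the cost of a factor $2$. Conditioned on the $\z_l$, each $\sum_l\epsilon_l\langle\x_i,\z_l\rangle$ is a Rademacher sum whose coefficients satisfy $|\langle\x_i,\z_l\rangle|=\sqrt{s}\,|(\x_i)_k|\leq K\sqrt{s}$, hence is subgaussian with parameter $K\sqrt{s}\sqrt{M}$; a maximal inequality over the $\tn$ indices then gives $\E\|\bar\z-\u\|_X\lesssim K\sqrt{s}\sqrt{\log(\tn)}/\sqrt{M}$, so that choosing $M\asymp sK^2\log(\tn)/t^2$ forces this below $t$.

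Finally, each admissible $\bar\z$ is an $M$-term average of atoms drawn from a finite set of the form $\{\sqrt{s}\,\omega\,\e_k\}\cup\{\zero\}$, so the number of distinct averages is at most the number of length-$M$ selections, giving $\log N\lesssim M\log n$ and hence $\sqrt{\log N}\lesssim K\sqrt{s}\sqrt{\log(\tn)\log n}/t$, which is the first estimate up to the explicit constants. The hard part will be twofold: pinning down the numerical constant and the precise arguments $10\tn$ and $4n$ inside the logarithms (these are sensitive to how the maximal inequality and the counting step are organised), and handling the complex phases, since the atom set is not finite over $\C$. I would address the latter by discretising the unit circle with a net of controlled cardinality before the counting step (this is what contributes the constant inside $\log(4n)$), or equivalently by a preliminary rounding of the phases of $\u$, after which the enumeration argument applies verbatim.
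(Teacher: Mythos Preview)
The paper does not actually prove this lemma: it is stated with attribution to \cite{rauhut2010compressive} and the surrounding text says explicitly that it ``is summarized from the results in Section~8 of \cite{rauhut2010compressive}.'' So there is no in-paper proof to compare against.

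Your sketch is the correct one and is precisely the approach of Section~8 in \cite{rauhut2010compressive}: the second bound via support-splitting plus a volumetric cover of each $s$-dimensional unit ball, and the first bound via Maurey's empirical method (random sampling from a signed coordinate dictionary, symmetrisation, a subgaussian maximal inequality over the $\tn$ linear functionals, and a counting of the possible empirical averages). Your identification of the two delicate points---tracking the exact constants $10\tn$, $4n$, $3\sqrt{2}$, and handling complex phases by discretising the unit circle before counting---is also accurate; in Rauhut's argument the phase discretisation is exactly what produces the small constant inside $\log(4n)$. Nothing is missing from your plan.
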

Next, we present a lemma that will be used in the following subsection.

Consider a matrix $\tB\in\C^{\tn\times n}$, and define the semi-norm for any vector $\u\in\C^n$ as
\begin{align*}
\|\u\|_{\tinf}:=\|\tB\u\|_{\infty}.
\end{align*}

By setting $\tB_{(i,:)}=\x_i$ for $i\in[\tn]$, it can be seen that $\|\u\|_{\tinf}=\|\u\|_{X}$ for any $\u\in\C^n$ and $\mu(\tB)=K$ as in the setting of Lemma \ref{lem: covering number estimates 0}. Therefore, the following lemma on the covering number $N(\sset,\|\cdot\|_{\tinf},t)$ is an immediate result of Lemma \ref{lem: covering number estimates 0}.
\begin{lemma}\label{lem: covering number estimates}
For a matrix $\tB\in\C^{\tn\times n}$ and an associated semi-norm $\|\cdot\|_{\tinf}$, we have
\begin{align*}
& \sqrt{\log N(\sset,\|\cdot\|_{\tinf},t)} \\
& \quad \lesssim \mu(\tB)\sqrt{s}\log\hn t^{-1}, \quad 0<t\leq 2\mu(\tB)\sqrt{s},\\
& \sqrt{\log N(\sset,\|\cdot\|_{\tinf},t)} \\
& \quad \leq \sqrt{2s}\sqrt{\log(\frac{en}{s})+\log(1+2\mu(\tB)\sqrt{s}/t)}, \quad t>0,
\end{align*}
where $\hn :=\max\{\tn,n\}$.
\end{lemma}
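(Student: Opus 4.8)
The plan is to deduce Lemma~\ref{lem: covering number estimates} directly from Lemma~\ref{lem: covering number estimates 0} via the identification already indicated before the statement. First I would take, for each $i\in[\tn]$, the vector $\x_i\in\C^n$ to be the $i$-th row of $\tB$, i.e. $\x_i=\tB_{(i,:)}$ (conjugating the row if the inner-product convention requires it, which does not affect $\ell_\infty$-norms), so that $|\langle\x_i,\u\rangle|=|(\tB\u)_i|$ for every $\u\in\C^n$. With this choice the semi-norm in~\eqref{eqn: semi norm} reads
\begin{align*}
\|\u\|_{X}=\max_{i\in[\tn]}|\langle\x_i,\u\rangle|=\max_{i\in[\tn]}|(\tB\u)_i|=\|\tB\u\|_{\infty}=\|\u\|_{\tinf},
\end{align*}
hence $N(\sset,\|\cdot\|_{\tinf},t)=N(\sset,\|\cdot\|_{X},t)$ for all $t>0$. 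Moreover the uniform bound on the entries, $K=\max_{i\in[\tn]}\|\x_i\|_{\infty}=\max_{0\le i<\tn,\,0\le k<n}|\tB_{ik}|$, equals the coherence parameter $\mu(\tB)$ by its definition. Thus Lemma~\ref{lem: covering number estimates 0} applies with $K=\mu(\tB)$, and its two displayed estimates translate immediately into statements about $N(\sset,\|\cdot\|_{\tinf},t)$.

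It then remains only to tidy the constants. The second estimate of Lemma~\ref{lem: covering number estimates 0} becomes verbatim the claimed second bound once $K$ is written as $\mu(\tB)$, so no further work is needed there. For the first estimate, Lemma~\ref{lem: covering number estimates 0} gives, for $0<t\le 2\mu(\tB)\sqrt{s}$,
\begin{align*}
\sqrt{\log N(\sset,\|\cdot\|_{\tinf},t)}\le 3\mu(\tB)\sqrt{2s}\,\sqrt{\log(10\tn)\log(4n)}\,t^{-1}.
\end{align*}
Since $\hn=\max\{\tn,n\}$, I would bound $\log(10\tn)\le\log 10+\log\hn$ and $\log(4n)\le\log 4+\log\hn$, which are both $\lesssim\log\hn$; therefore $\sqrt{\log(10\tn)\log(4n)}\lesssim\log\hn$. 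Absorbing the numerical factor $3\sqrt{2}$ and the implied constant into $\lesssim$ yields
\begin{align*}
\sqrt{\log N(\sset,\|\cdot\|_{\tinf},t)}\lesssim\mu(\tB)\sqrt{s}\,\log\hn\,t^{-1},
\end{align*}
which is exactly the first bound.

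I do not expect any genuine obstacle here: all the analytic content is already contained in Lemma~\ref{lem: covering number estimates 0} (itself drawn from Section~8 of~\cite{rauhut2010compressive}), and what remains is the elementary observation that the rows of $\tB$ realise the semi-norm $\|\cdot\|_{\tinf}$ with $K=\mu(\tB)$. The only mild point to watch is the regime of very small $\tn$ or $n$, where $\log\hn$ could be near $0$ and the step $\log(10\tn)\lesssim\log\hn$ would fail; this is handled by the standing assumption that $\tn$ and $n$ exceed a small absolute constant (so that $\log\hn\ge 1$), which holds in every application of the lemma in this paper.
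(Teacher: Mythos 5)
Your proposal is correct and follows essentially the same route as the paper: identifying $\x_i=\tB_{(i,:)}$ so that $\|\cdot\|_{\tinf}=\|\cdot\|_{X}$ and $K=\mu(\tB)$, then reading off both bounds from Lemma~\ref{lem: covering number estimates 0} and absorbing $\sqrt{\log(10\tn)\log(4n)}\lesssim\log\hn$ into the implicit constant. Your remark about needing $\log\hn$ bounded away from zero is a fair (and harmless) caveat that the paper leaves implicit.
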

We note that the alternative forms of this lemma have been used in \cite{krahmer2014suprema}\cite{eftekhari2012restricted}.
\subsection{Proof of Theorem \ref{thm: the main theorem}}
\begin{proof}
By Proposition 2.5 of \cite{rauhut2010compressive}, the restricted isometry constant of our framework $\A=\U\D\tB$ can be written as
\begin{align}
\ric=\sup_{\x\in\sset}\left|\|\A\x\|_2^2-\|\x\|_2^2\right|, \label{eqn: ric dfn}
\end{align}
To complete the proof, our objective is to show that $\ric\leq\delta$ for $\delta\in(0,1)$ under the conditions in Theorem \ref{thm: the main theorem}.

We require the following important result due to Krahmer et al.:
\begin{theorem}[\cite{krahmer2014suprema}, Theorem 3.1]\label{thm: bounds on chaos processes}
Let $\mset$ be a set of matrices, and let $\bxi$ be a random vector whose entries $\xi_j$ are independent, mean $0$, variance $1$, and $r$-subgaussian random variables. Set
\begin{align*}
d_F(\mset)&=\sup_{\S\in\mset}\|\S\|_{F},\\
d_{\ttwo}(\mset)&=\sup_{\S\in\mset}\|\S\|_{\ttwo}.
\end{align*}
and
\begin{align}
&C_{\mset}(\bxi):=\sup_{\S\in\mset}\left|\|\S\bxi\|_2^2-\E\{\|\S\bxi\|_2^2\}\right|, \label{eqn: suprema of chaos}\\
&E=\gamma_2(\mset,\|\cdot\|_{\ttwo})\left[\gamma_2(\mset,\|\cdot\|_{\ttwo})+d_F(\mset)\right]\nonumber\\
& \quad +d_F(\mset)d_{\ttwo}(\mset), \nonumber\\
&V=d_{\ttwo}(\mset)\left[\gamma_2(\mset,\|\cdot\|_{\ttwo})+d_F(\mset)\right], \nonumber\\
&U=d_{\ttwo}^2(\mset). \nonumber
\end{align}
Then, for $t>0$,
\begin{align}\label{eqn: suprema of chaos prob bound}
\Prob(C_{\mset}(\bxi)\geq c_1E+t)\leq2\exp (-c_2\min\{\frac{t^2}{V^2},\frac{t}{U}\}).
\end{align}
The constants $c_1$, $c_2$ depends only on $L$.
\end{theorem}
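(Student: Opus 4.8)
The plan is to recognize $C_{\mset}(\bxi)$ as the supremum of a centered second-order chaos and to control both its expectation and its fluctuations by generic chaining. First I would record the elementary identity $\E\{\|\S\bxi\|_2^2\}=\|\S\|_F^2$ (valid because the entries of $\bxi$ are independent with mean $0$ and variance $1$), so that for each $\S$ the relevant quantity is the quadratic form $\|\S\bxi\|_2^2-\|\S\|_F^2=\bxi^{*}(\S^{*}\S)\bxi-\E\{\bxi^{*}(\S^{*}\S)\bxi\}$, a centered chaos in the subgaussian vector $\bxi$. Thus $C_{\mset}(\bxi)=\sup_{\S\in\mset}|\bxi^{*}(\S^{*}\S)\bxi-\E\{\cdots\}|$, and I would split the task into (i) an \emph{expectation bound} $\E\,C_{\mset}(\bxi)\lesssim E$ and (ii) a \emph{deviation bound} showing that $C_{\mset}(\bxi)$ exceeds a constant multiple of its mean plus $t$ only with the stated mixed sub-Gaussian/sub-exponential probability.

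For the increments that drive the chaining I would invoke the Hanson--Wright inequality: for fixed $\S_1,\S_2\in\mset$, setting $\B=\S_1^{*}\S_1-\S_2^{*}\S_2$, the increment $(\|\S_1\bxi\|_2^2-\|\S_1\|_F^2)-(\|\S_2\bxi\|_2^2-\|\S_2\|_F^2)$ has a mixed tail governed by $\|\B\|_F$ (the sub-Gaussian scale) and $\|\B\|_{\ttwo}$ (the sub-exponential scale). The resulting process fits Talagrand's generic chaining for mixed-tail increments (in the form due to Dirksen), which yields simultaneously the expectation estimate in terms of $\gamma_2$ and $\gamma_1$ functionals and a tail of exactly the shape $2\exp(-c_2\min\{t^2/V^2,\,t/U\})$ in \eqref{eqn: suprema of chaos prob bound}. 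The decoupling step of de la Pe\~na and Montgomery-Smith is used to reduce the centered quadratic form to a bilinear form $\bxi^{*}\B\bxi'$ in an independent copy $\bxi'$, separating out the diagonal contribution, which is a sum of independent sub-exponential terms controlled by Bernstein's inequality.

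The main obstacle, and the genuinely novel content of the cited result, is obtaining the \emph{sharp} complexity dependence $\gamma_2(\mset,\|\cdot\|_{\ttwo})$ in the operator-norm metric rather than the much larger $\gamma_2(\mset,\|\cdot\|_F)$ produced by a naive estimate. Indeed, bounding $\|\B\|_F\le 2\,d_{\ttwo}(\mset)\|\S_1-\S_2\|_F$ and chaining directly would force the Frobenius metric into the functional; the improvement requires exploiting the bilinear chaos structure after decoupling, chaining over $\S$ in the operator-norm metric while absorbing the Frobenius radius $d_F(\mset)$ as an amplitude factor rather than inside the chaining functional. Carrying this out is what produces $E=\gamma_2(\gamma_2+d_F)+d_F\,d_{\ttwo}$ together with $V=d_{\ttwo}(\gamma_2+d_F)$ and $U=d_{\ttwo}^2$.

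Finally I would assemble (i) and (ii): the mixed-tail chaining bound gives $\Prob(C_{\mset}(\bxi)\ge c_1E+t)\le 2\exp(-c_2\min\{t^2/V^2,\,t/U\})$, where $V$ records the fluctuation scale of the supremum (hence the factor $\gamma_2+d_F$ multiplying the operator-norm radius $d_{\ttwo}$) and $U$ records the sub-exponential scale $d_{\ttwo}^2$. For the complex setting of the paper I would reduce to real subgaussian vectors by treating real and imaginary parts separately, which affects only the absolute constants; the dependence of $c_1,c_2$ on the subgaussian parameter $r$ (written $L$ in the statement) is tracked through the constants in the Hanson--Wright estimate.
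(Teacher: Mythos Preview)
The paper does not prove this statement at all: it is quoted verbatim as Theorem~3.1 of \cite{krahmer2014suprema} and used as a black box inside the proof of Theorem~\ref{thm: the main theorem}. So there is no ``paper's own proof'' to compare against, and your task was vacuous in that sense.

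That said, your sketch is a faithful outline of how Krahmer--Mendelson--Rauhut actually establish the result: rewrite $\|\S\bxi\|_2^2-\|\S\|_F^2$ as a centered second-order chaos, decouple, apply Hanson--Wright-type increment bounds, and run Talagrand/Dirksen mixed-tail generic chaining so that the chaining functional appears in the operator-norm metric $\|\cdot\|_{\ttwo}$ rather than the Frobenius metric. You correctly identify the crux, namely that a naive bound on $\|\S_1^*\S_1-\S_2^*\S_2\|_F$ would force $\gamma_2(\mset,\|\cdot\|_F)$ into the estimate, and that the improvement comes from chaining the bilinear form after decoupling so that $d_F(\mset)$ enters only as an amplitude. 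One small point: the constants are stated to depend on ``$L$'' in the theorem, which in \cite{krahmer2014suprema} denotes the subgaussian parameter; you correctly interpret this as what the present paper calls $r$.
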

Here, $C_{\mset}(\bxi)$ represents the suprema of chaos processes associated with a set of matrices $\mset$. This theorem implies that $C_{\mset}(\bxi)$ can be bounded by three parameters: the suprema of Frobenius norms $d_F(\mset)$, the suprema of operator norms $d_{\ttwo}(\mset)$ and a $\gamma_2$-functional $\gamma_2(\mset,\|\cdot\|_{\ttwo})$.

Without going into the details, we note that the $\gamma_2$-functional $\gamma_2(\mset,\|\cdot\|_{\ttwo})$ can be bounded in terms of the covering numbers $N(\mset,\|\cdot\|_{\ttwo},u)$ as below.
\begin{align}
\gamma_2(\mset,\|\cdot\|_{\ttwo})\leq c\int_{0}^{d_{\ttwo}(\mset)}\sqrt{\log N(\mset,\|\cdot\|_{\ttwo},u)} \mathrm{d}u, \label{eqn: gamma2 bound}
\end{align}
where the integral is known as Dudley integral or entropy integral \cite{talagrand2005generic}.

We now proceed to express the restricted isometry constant of our framework in such a form that its bound can be derived by appealing to Theorem \ref{thm: bounds on chaos processes}. Recall that $\A=\U\D\tB$ and $\D=\diag(\bxi)$. Let $\Vx=\U\diag(\tB\x)$, then the restricted isometry constant of $\A$ is
\begin{align*}
\ric&=\sup_{\x\in\sset}\left|\|\A\x\|_2^2-\xnorm^2\right| \\
    &=\sup_{\x\in\sset}\left|\|\Vx\bxi\|_2^2-\xnorm^2\right|,
\end{align*}
and clearly
\begin{align*}
\E\{\|\Vx\bxi\|_2^2\}=\E\{\|\A\x\|_2^2\}=\xnorm^2.
\end{align*}
Hence, the restricted isometry constant $\ric$ can be expressed as
\begin{align}\label{eqn: ric intermediate represetation}
\ric=\sup_{\x\in\sset}\left|\|\V_{\x}\bxi\|_2^2-\E\{\|\Vx\bxi\|_2^2\}\right|.
\end{align}
For each vector $\x\in\sset$, there exists a corresponding matrix $\V_{\x}$. We define the set of matrices associated with all $s$-sparse signals $\x\in\sset$ as
\begin{align*}
\mset_{V}=\{\V_{\x}:\x\in\sset\}.
\end{align*}
Then the restricted isometry constant \eqref{eqn: ric intermediate represetation} can be written as
\begin{align}\label{eqn: ric in suprema of chaos}
\ric=\sup_{\V_{\x}\in\mset_{V}}\left|\|\V_{\x}\bxi\|_2^2-\E\{\|\Vx\bxi\|_2^2\}\right|.
\end{align}
Therefore, we have completely express the restricted isometry constant of our framework in the form of Theorem \ref{thm: bounds on chaos processes} (by comparing \eqref{eqn: suprema of chaos} and \eqref{eqn: ric in suprema of chaos}), where $\S$ and $\mset$ are replaced with $\V_{\x}$ and $\mset_{V}$, respectively.

Now, before bounding the restricted isometry constant $\ric$ by using Theorem \ref{thm: bounds on chaos processes} \eqref{eqn: suprema of chaos prob bound}, we only need to estimate the three associated parameters $d_F(\mset_V)$, $d_{\ttwo}(\mset_V)$ and $\gamma_2(\mset_V,\|\cdot\|_{\ttwo})$.

By Proposition \ref{prop: UTF} and $\tB^*\tB=\I$,
\begin{align*}
\|\Vx\|_F=\|\tB\x\|_2=\xnorm=1, \forall \x\in\sset,
\end{align*}
which means
\begin{align}
d_F(\mset_V)=1. \label{eqn: parameter 1 bound}
\end{align}
Since any induced norm is a sub-multiplicative matrix norm,
\begin{align}
\|\Vx\|_{\ttwo}&\leq\|\U\|_{\ttwo}\|\diag(\tB\x)\|_{\ttwo} \nonumber\\
               &\leq\sqrt{\frac{\tn}{m}}\|\tB\x\|_{\infty}, \label{eqn: Vx operator norm bound}
\end{align}
where the last step is due to Proposition \ref{prop: UTF}.

For any vector $\x\in\sset$, we denote by $\x^s$ the length-$s$ vector that retains only the non-zero elements in $\x$. And correspondingly for any $\z\in\C^n$, we denote by $\z^s$ the length-$s$ vector that retains only the elements that have the same indexes as those of the non-zero elements in $\x$. Thus, we immediately have $\|\x^s\|_2=\xnorm=1$, and $\langle\z,\x\rangle=\langle\z^s,\x^s\rangle$.
Hence, for any $\x\in\sset$
\begin{align*}
\|\tB\x\|_{\infty} &= \max_{j\in[n]}\left\{|\langle \tB_{(j,:)},\x\rangle|\right\} \\
                   &= \max_{j\in[n]}\left\{|\langle \tB_{(j,:)}^s,\x^s\rangle|\right\} \\
                   &\leq \max_{j\in[n]}\left\{\|\tB_{(j,:)}^s\|_2\|\x^s\|_2\right\} \\
                   &= \|\x\|_2\max_{j\in[n]}\left\{\|\tB_{(j,:)}^s\|_2\right\} \\
                   &\leq \sqrt{s}\mu(\tB)\|\x\|_2 \\
                   &= \sqrt{s}\mu(\tB),
\end{align*}
where the first inequality is due to Cauchy-Schwarz inequality, and the second one is due to the definition of coherence.

Thus, we have
\begin{align}
d_{\ttwo}(\mset_V)=\sqrt{\frac{s\tn}{m}}\mu(\tB). \label{eqn: parameter 2 bound}
\end{align}
By \eqref{eqn: Vx operator norm bound}, it follows that for any two elements $\Vx,\Vy\in\mset_V$
\begin{align*}
\|\Vx-\Vy\|_{\ttwo}=\|\V_{\x-\y}\|_{\ttwo}\leq\sqrt{\frac{\tn}{m}}\|\x-\y\|_{\tinf}.
\end{align*}
Thus, for every $u>0$ we have
\begin{align*}
N(\mset_V,\|\cdot\|_{\ttwo},u)\leq N(\sset,\sqrt{\frac{\tn}{m}}\|\cdot\|_{\tinf},u).
\end{align*}

By Lemma \ref{lem: covering number estimates} and the fact that $N(\sset,\sqrt{\frac{\tn}{m}}\|\cdot\|_{\tinf},u)=N(\sset,\|\cdot\|_{\tinf},\sqrt{\frac{m}{\tn}}u)$, $N(\sset,\sqrt{\frac{\tn}{m}}\|\cdot\|_{\tinf},u)$ satisfies the following two bounds
\begin{align*}
& \sqrt{\log N(\sset,\sqrt{\frac{\tn}{m}}\|\cdot\|_{\tinf},u)}\lesssim \mu(\tB)\sqrt{\frac{s\tn}{mu^2}}\log\hn \\
&\sqrt{\log N(\sset,\sqrt{\frac{\tn}{m}}\|\cdot\|_{\tinf},u)} \\
& \quad \leq \sqrt{2s}\sqrt{\log(\frac{en}{s})+\log(1+\frac{2\mu(\tB)\sqrt{s}}{u}\sqrt{\frac{\tn}{m}})}
\end{align*}
We combine these inequalities to estimate the entropy integral \eqref{eqn: gamma2 bound}: we apply the first bound for $c\sqrt{\frac{\tn}{m}}\mu(\tB)\leq u\leq d_{\ttwo}(\mset_V)$, and the second bound for $0<u\leq c\sqrt{\frac{\tn}{m}}\mu(\tB)$, where $d_{\ttwo}(\mset_V)=\sqrt{\frac{s\tn}{m}}\mu(\tB)$, $c>0$ and $c^2\leq s$. It reveals that
\begin{align}
\gamma_2(\mset_V,\|\cdot\|_{\ttwo})\lesssim \sqrt{\frac{s\tn}{m}}\mu(\tB)(\log s)(\log \hn). \label{eqn: entropy integral for gamma2}
\end{align}
With the bounds for the three parameters $d_F(\mset_V)$, $d_{\ttwo}(\mset_V)$ and $\gamma_2(\mset_V
,\|\cdot\|_{\ttwo})$, the proof is completed by using Theorem \ref{thm: bounds on chaos processes}. Detail steps on the application of Theorem \ref{thm: bounds on chaos processes} follow those in Section 4 of \cite{krahmer2014suprema}.
\end{proof}
\section{Coherence Analysis}\label{app: coherence analysis}
Consider an $n\times n$ ($n=2^d$) unitary matrix $\B=\F\LLamda\W^*$, where $\F$ is the normalized DFT matrix, $\LLamda$ is a diagonal matrix whose diagonal entries are a Golay sequence constructed by the Rudin-Shapiro iterative process and $\W^*$ corresponds to the transpose of the orthonormal Haar matrix, which is defined by
\begin{align*}
\W_1^*&=[1],\\
\W_{2^l}^*&=\frac{1}{\sqrt{2}}\begin{bmatrix}\W_{2^l-1}^*\otimes\begin{bmatrix}1\\1\end{bmatrix}&\I_{2^l-1}\otimes\begin{bmatrix}1\\-1\end{bmatrix}\end{bmatrix},
\end{align*}
where $\otimes$ denotes the operation of kronecker product. We need to prove that $\mu(\B)\leq \sqrt{\frac{2}{n}}$.
\begin{proof}
Let $\one_q$ denote a column vector with $q$ ones. Define $\ggamma_{2q}$ as a length-$2q$ sequence by
\begin{align*}
\ggamma_{2q}=\begin{bmatrix}\one_q\\-\one_q\end{bmatrix}.
\end{align*}
Note that the columns of $\W^*$ can be written as
\begin{align*}
&\W_{(:,0)}^*=\frac{1}{\sqrt{2^d}}\one_{2^d},\\
&\W_{(:,1)}^*=\frac{1}{\sqrt{2^d}}\ggamma_{2^d}\\
&\W_{(:,2^l+s)}^*=\frac{1}{\sqrt{2^{d-l}}}\begin{bmatrix}\zero_{s\cdot2^{d-l}}\\ \ggamma_{2^{d-l}} \\ \zero_{(2^l-s-1)\cdot 2^{d-l}}\end{bmatrix},
\end{align*}
where $1\leq l\leq d-1$, $0\leq s\leq2^l-1$. Let $\llamda$ represent a length-$2^d$ Golay sequence constructed from the Rudin-Shapiro recursive process. Suppose we divide this sequence into $L=2^l$ segments $\llamda_{i}$ ($i=0,\cdots,2^l-1$), each of which is of length $2^{d-l}$ ($1\leq l\leq d-1$), i.e.,
\begin{align*}
\llamda=\begin{bmatrix}\llamda_0\\\llamda_1\\\vdots\\\llamda_{2^l-1} \end{bmatrix}.
\end{align*}
By definition of the recursive process, each segment $\llamda_{i}^T$ is still a Golay sequence. And it is also clear that $\llamda_{i}^T\odot\ggamma_{2^{d-l}}$ is a Golay sequence, where $\odot$ represents the element-wise multiplication. Therefore, we can easily get the bound $\mu(\B)\leq \sqrt{\frac{2}{n}}$.
\end{proof}

\bibliographystyle{IEEEtran}
\bibliography{bibfile}
\end{document}